\newcommand{\USETIKZ}{}
\journal{Information Processing Letters}
\def\IfEmptyTF#1{%
  \if\relax\detokenize{#1}\relax
    \expandafter\@firstoftwo
  \else
    \expandafter\@secondoftwo
  \fi}
\newcommand \red[1] 	{\textcolor{red}{#1}}
\newcommand \suppref[1] 	{{\em\red{#1}}}
\newcommand \projective {\mathrm{pr}}
\newcommand \planar {\mathrm{pl}}
\NewDocumentCommand \bigO { m } {O(#1)}
\newtheoremstyle{theoremstyle}
    {15pt} 
    {} 
    {\itshape} 
    {} 
    {\bfseries} 
    {.} 
    {.5em} 
    {} 
\theoremstyle{theoremstyle}
\newtheorem{lemma}{Lemma}[section]
\newtheorem{theorem}[lemma]{Theorem}
\newtheorem{proposition}[theorem]{Proposition}
\newtheorem{corollary}[proposition]{Corollary}
\newtheorem{conjecture}[corollary]{Conjecture}
\newtheorem{property}[conjecture]{Property}
\NewDocumentCommand \Root { } {r}
\NewDocumentCommand \Ftree { O{T} }{ #1 }
\NewDocumentCommand \Rtree { O{\Root} O{\Ftree} }{ {#2}^{#1} }
\NewDocumentCommand \SubRtree { m O{\Root} O{\Ftree} }{ #3_{#1}^{#2} }
\NewDocumentCommand \RootedTrees { O{n} }{ \mathcal{R}_{#1} }
\NewDocumentCommand \FreeTrees { O{n} }{ \mathcal{F}_{#1} }
\NewDocumentCommand \BalancedBistar { O{\Ftree} }{ {#1}^{\mathrm{bb}} }
\NewDocumentCommand \QuasiStar { O{\Ftree} }{ {#1}^{\mathrm{qstar}} }
\NewDocumentCommand \Star { O{\Ftree} }{ {#1}^{\mathrm{star}} }
\NewDocumentCommand \StarHub { O{\Ftree} }{ {#1}^{\mathrm{hub,star}} }
\NewDocumentCommand \Linear { O{\Ftree} }{ {#1}^{\mathrm{linear}} }
\NewDocumentCommand \maxroots { O{\Ftree} } {V_*(#1)}
\NewDocumentCommand \maxrootsk { O{\Ftree} } {V_1(#1)}
\NewDocumentCommand \Nvert { m m }				{ s_{#1}(#2) }
\NewDocumentCommand \NvertT { m m O{\Ftree} }	{ s_{\Rtree[#1][#3]}(#2) }
\NewDocumentCommand \numleaves { O{\Ftree} }	{ L(#1) }
\NewDocumentCommand \largestNvert { m m m } { s_{#1}(#2,#3) }
\NewDocumentCommand \largestNvertT { m m m O{\Ftree} } { s_{\Rtree[#1][#4]}(#2,#3) }
\NewDocumentCommand \singlelargestNvert { m m } { s_{#1}(#1,#2) }
\NewDocumentCommand \indexofT { m m O{\Ftree} } { \sigma_{\Rtree[#1][#3]}(#2) }
\NewDocumentCommand \degreesymbol { } { d }
\NewDocumentCommand \degreeT { m O{\Ftree} } { \degreesymbol_{#2}(#1) }
\NewDocumentCommand \degree { m } { \degreesymbol(#1) }
\NewDocumentCommand \outdegreeT { m O{\Root} O{\Ftree} } { \degreesymbol_{\Rtree[#2][#3]}(#1) }
\NewDocumentCommand \outdegree { m O{\Root} } { \degreesymbol_{#2}(#1) }
\NewDocumentCommand \routdegreeT { O{\Root} } { \outdegreeT{#1}[#1] }
\NewDocumentCommand \routdegree { O{\Root} } { \outdegree{#1}[#1] }
\NewDocumentCommand \routdegreeTshort { O{\Root} } { d_{#1} }
\NewDocumentCommand \parent { m O{\Root} } { p(#2,#1) }
\NewDocumentCommand \neighbours {} { \Gamma }
\NewDocumentCommand \neighsT { m O{\Ftree} } { \neighbours_{#2}(#1) }
\NewDocumentCommand \outneighsT { m O{\Root} O{\Ftree} } { \neighbours_{\Rtree[#2][#3]}(#1) }
\NewDocumentCommand \outneighs { m O{\Root} } { \neighbours_{#2}(#1) }
\NewDocumentCommand \RoutneighsT { O{\Root} O{\Ftree} } { \neighbours_{\Rtree[#1][#2]}(#1) }
\NewDocumentCommand \arr { } {\pi}
\NewDocumentCommand \dl { m O{\arr} } { d_{#2}(#1) }
\NewDocumentCommand \D { m O{\arr} } { D_{#2}(#1) }
\NewDocumentCommand \gDmin {m m} {m_{#1}\left[ #2 \right]}
\NewDocumentCommand \DminProj {O{\Rtree}} {\gDmin{\projective}{#1}}
\NewDocumentCommand \DminPlan {O{\Ftree}} {\gDmin{\planar}{#1}}
\NewDocumentCommand \Dmin {O{\Ftree}} {\gDmin{}{#1}}
\NewDocumentCommand \gDMax {m m} {M_{#1}\left[ #2 \right]}
\NewDocumentCommand \DMaxProj {O{\Rtree}} {\gDMax{\projective}{#1}}
\NewDocumentCommand \DMaxPlan {O{\Ftree}} {\gDMax{\planar}{#1}}
\NewDocumentCommand \DMax {O{\Ftree}} {\gDMax{}{#1}}
\NewDocumentCommand \subtreesizelist { O{} } {\mathcal{L}_{#1}}
\newcommand \magadjlist {\mathcal{M}}
\tikzstyle{ipe stylesheet} = [
\definecolor{red}{rgb}{1,0,0}
\definecolor{blue}{rgb}{0,0,1}
\definecolor{green}{rgb}{0,1,0}
\definecolor{yellow}{rgb}{1,1,0}
\definecolor{orange}{rgb}{1,0.647,0}
\definecolor{gold}{rgb}{1,0.843,0}
\definecolor{purple}{rgb}{0.627,0.125,0.941}
\definecolor{gray}{rgb}{0.745,0.745,0.745}
\definecolor{brown}{rgb}{0.647,0.165,0.165}
\definecolor{navy}{rgb}{0,0,0.502}
\definecolor{pink}{rgb}{1,0.753,0.796}
\definecolor{seagreen}{rgb}{0.18,0.545,0.341}
\definecolor{turquoise}{rgb}{0.251,0.878,0.816}
\definecolor{violet}{rgb}{0.933,0.51,0.933}
\definecolor{darkblue}{rgb}{0,0,0.545}
\definecolor{darkcyan}{rgb}{0,0.545,0.545}
\definecolor{darkgray}{rgb}{0.663,0.663,0.663}
\definecolor{darkgreen}{rgb}{0,0.392,0}
\definecolor{darkmagenta}{rgb}{0.545,0,0.545}
\definecolor{darkorange}{rgb}{1,0.549,0}
\definecolor{darkred}{rgb}{0.545,0,0}
\definecolor{lightblue}{rgb}{0.678,0.847,0.902}
\definecolor{lightcyan}{rgb}{0.878,1,1}
\definecolor{lightgray}{rgb}{0.827,0.827,0.827}
\definecolor{lightgreen}{rgb}{0.565,0.933,0.565}
\definecolor{lightyellow}{rgb}{1,1,0.878}
\definecolor{black}{rgb}{0,0,0}
\definecolor{white}{rgb}{1,1,1}
\begin{document}
\allowdisplaybreaks

\twocolumn[{
\begin{frontmatter}
\title{The Maximum Linear Arrangement Problem for trees under projectivity and planarity}

\author[affiliation1]{Llu\'is Alemany-Puig\corref{cor1}}
\ead{lalemany@cs.upc.edu}
\cortext[cor1]{Corresponding author}

\author[affiliation1]{Juan Luis Esteban}
\ead{esteban@cs.upc.edu}

\author[affiliation1]{Ramon Ferrer-i-Cancho}
\ead{rferrer@cs.upc.edu}

\affiliation[affiliation1]{
	organization={Quantitative, Mathematical and Computational Linguistics Research Group, Computer Science Department, Universitat Polit\`ecnica de Catalunya},
	addressline={Jordi Girona 1-3},
	postcode={08034},
	city={Barcelona},
	country={Catalonia, Spain}
}

\begin{abstract}
A linear arrangement is a mapping $\arr$ from the $n$ vertices of a graph $G$ to $n$ distinct consecutive integers. Linear arrangements can be represented by drawing the vertices along a horizontal line and drawing the edges as semicircles above said line. In this setting, the length of an edge is defined as the absolute value of the difference between the positions of its two vertices in the arrangement, and the cost of an arrangement as the sum of all edge lengths. Here we study two variants of the Maximum Linear Arrangement problem ({\tt MaxLA}), which consists of finding an arrangement that maximizes the cost. In the {\tt planar} variant for free trees, vertices have to be arranged in such a way that there are no edge crossings. In the {\tt projective} variant for rooted trees, arrangements have to be planar and the root of the tree cannot be covered by any edge. In this paper we present algorithms that are linear in time and space to solve {\tt planar} and {\tt projective MaxLA} for trees. We also prove several properties of maximum projective and planar arrangements, and show that caterpillar trees maximize {\tt planar MaxLA} over all trees of a fixed size thereby generalizing a previous extremal result on trees.
\end{abstract}

\begin{keyword}
Linear arrangements \sep Maximum Linear Arrangement Problem \sep Projectivity \sep Planarity \sep One-page embeddings
\end{keyword}
\end{frontmatter}
}]

\section{Introduction}
\label{sec:introduction}

A linear arrangement $\arr$ of a graph $G=(V,E)$, $n=|V|$, is a mapping of the vertices in $V$ to distinct consecutive integers in $[1,n]$; it can be seen both as a linear ordering of the vertices or as a permutation of the vertices, where vertices lie on a horizontal line in integer positions. In such an arrangement, the distance between two vertices $u,v\in V$ can be defined as $\dl{u,v}=|\arr(u) - \arr(v)|$. For any edge $uv$, $\dl{u,v}$ represents the length of edge $uv$ in $\arr$. We define the cost of an arrangement $\arr$ as the sum of all edge lengths $\D{G}=\sum_{uv\in E} \dl{u,v}$ \cite{Goldberg1976a,Shiloach1979a,Chung1984a,Hochberg2003a,Gildea2007a}.

The Minimum Linear Arrangement problem ({\tt minLA}) consists of minimizing the cost over all $n!$ possible arrangements $\arr$ of the vertices of a graph $G$. We denote this minimum as $\Dmin[G]$. The problem is NP-Hard for arbitrary graphs \cite{Garey1976a}. Various polynomial-time algorithms are available to solve {\tt minLA} for trees \cite{Goldberg1976a,Shiloach1979a,Chung1984a}. The fastest, to the best of our knowledge, is due to Chung \cite{Chung1984a}. On the other hand, the Maximum Linear Arrangement problem ({\tt MaxLA}) consists of finding the maximum cost, denoted here as $\DMax[G]$ \cite{Even1975a,Hassin2001a,Nurse2018a}; it is also NP-Hard for arbitrary graphs \cite{Even1975a}. Applications of {\tt MaxLA} can be found in placement of obnoxious facilities \cite{Tamir1991a}, and also in statistical normalization of the sum of edge lengths in Quantitative Linguistics studies \cite{Tily2010a,Gulordava2016a,Ferrer2022a}.

There exist several variants of edge length optimization problems in linear arrangements \cite{Alemany2022a}; two of them are the {\em planar} and the {\em projective} variants. In the {\em planar} variant ({\tt minLA}/{\tt MaxLA} under the {\em planarity} constraint), the placement of the vertices of a free tree is constrained so that there are no edge crossings. Consider two undirected edges of a graph $st,uv\in E$. Assume, without loss of generality (w.l.o.g.), that $\arr(s)<\arr(t)$, $\arr(u)<\arr(v)$ and $\arr(s)<\arr(u)$; then $st$ and $uv$ cross if $\arr(s) < \arr(u) < \arr(t) < \arr(v)$ \cite{Bernhart1974a}. Arrangements without edge crossings are known as {\em planar} arrangements \cite{Kuhlmann2006a}, and also one-page book embeddings \cite{Bernhart1974a}. There are several $\bigO{n}$-time algorithms to solve {\tt planar minLA} \cite{Iordanskii1987a,Hochberg2003a,Alemany2022a}. In the {\em projective} variant ({\tt minLA}/{\tt MaxLA} under the projectivity constraint), a rooted tree is arranged so that the arrangement is planar and the root is not covered. Given an edge $uv$, assume w.l.o.g., $\arr(u)<\arr(v)$; then a vertex $w$ is covered by $uv$ if $\arr(u)<\arr(w)<\arr(v)$. Arrangements of a rooted tree without edge crossings where the root is not covered are known as {\em projective} \cite{Kuhlmann2006a,Melcuk1988a}. There are several $\bigO{n}$-time algorithms to solve {\tt projective minLA} \cite{Hochberg2003a,Gildea2007a,Alemany2022a}.

{\tt Projective minLA} can be solved in time $\bigO{n}$, by arranging the subtrees of the rooted tree on alternating sides of the root in the arrangement from smallest to largest in an inside-out fashion so that vertices of the same subtree are arranged contiguously \cite{Hochberg2003a,Gildea2007a,Alemany2022a} (Figure \ref{fig:introduction:shape_minimum_projective}). {\tt Planar minLA} can be solved in time $\bigO{n}$ \cite{Hochberg2003a,Alemany2022a} by first finding one of the two centroidal\footnote{See \cite[p. 35]{Harary1969a} for a definition of centroid of a tree.} vertices $c$ of the input free tree and then solving {\tt projective minLA} for the same tree but rooted at $c$ \cite{Hochberg2003a,Alemany2022a}.

\begin{figure}
	\centering
\scalebox{0.925}{
\begin{tikzpicture}[ipe stylesheet]
  \draw
    (64, 752) rectangle (128, 744);
  \node[ipe node]
     at (88, 732) {$\SubRtree{1}$};
  \draw
    (248, 752) rectangle (312, 744);
  \node[ipe node]
     at (272, 732) {$\SubRtree{2}$};
  \draw
    (160, 752) rectangle (176, 744);
  \draw
    (200, 752) rectangle (216, 744);
  \node[ipe node]
     at (156, 732) {$\SubRtree{k-1}$};
  \node[ipe node]
     at (204, 732) {$\SubRtree{k}$};
  \pic
     at (188, 748) {ipe disk};
  \pic[ipe mark small]
     at (136, 748) {ipe disk};
  \pic[ipe mark small]
     at (144, 748) {ipe disk};
  \pic[ipe mark small]
     at (152, 748) {ipe disk};
  \pic[ipe mark small]
     at (224, 748) {ipe disk};
  \pic[ipe mark small]
     at (232, 748) {ipe disk};
  \pic[ipe mark small]
     at (240, 748) {ipe disk};
  \node[ipe node]
     at (184, 736) {$\Root$};
  \draw[-{>[ipe arrow tiny]}]
    (187.9999, 747.9999)
     arc[start angle=30.4967, end angle=144.5243, radius=54.8917];
  \draw[-{>[ipe arrow tiny]}]
    (188.0001, 747.9999)
     arc[start angle=30.4967, end angle=144.5243, x radius=-54.8917, y radius=54.8917];
  \draw[-{>[ipe arrow tiny]}]
    (188, 748)
     arc[start angle=-148.5994, end angle=-54.0194, x radius=13.8786, y radius=-13.8786];
  \draw[-{>[ipe arrow tiny]}]
    (188, 748)
     arc[start angle=-148.5994, end angle=-54.0194, radius=-13.8786];
\end{tikzpicture}
}
	\caption{The basic layout of a minimum projective arrangement \cite{Hochberg2003a}. The subtrees are arranged on alternating sides of the root $\Root$, from smallest at the center of the arrangement to largest at the ends. Note that $|V(\SubRtree{1})|\geq\cdots\geq|V(\SubRtree{k})|$. In this figure, $k$ denotes the degree of $\Root$.}
	\label{fig:introduction:shape_minimum_projective}
\end{figure}
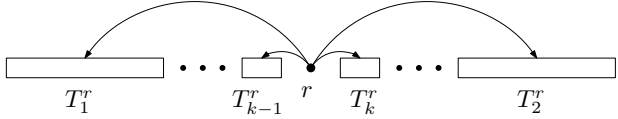

Much research has been carried out on the minimum variant of the problem. However, to the best of our knowledge, comparatively less research has been conducted on its maximum variant. It was probably first studied by Harper \cite{Harper1964a,Harper1966a}. On trees, some research has been carried out on {\tt unconstrained MaxLA}, denoted as $\DMax$, and {\tt planar MaxLA}, denoted as $\DMaxPlan$, where `$\mathrm{pl}$' stands for `planar', as well as on {\tt projective MaxLA}, denoted as $\DMaxProj$, where `$\mathrm{pr}$' stands for `projective' \cite{Ferrer2013a,Ferrer2015a,Ferrer2021a}. Concerning the unconstrained case, it has been shown that $\DMax$ is maximized by a balanced bistar (free) tree \cite{Ferrer2021a} (denoted as $\BalancedBistar$, illustrated in Figure \ref{fig:introduction:tree_types}(a)), i.e.,
\begin{align*}
\max_{\Ftree \in \FreeTrees} \{\DMax\}
	&= \DMax[ \BalancedBistar ] \\
	&= \frac{1}{4}\left[3(n - 1)^2 + 1 - n \bmod 2\right],
\end{align*}
where $\FreeTrees$ is the set of all $n$-vertex free trees. Exact values of $\DMax$ have been derived for the class of bistar trees (which comprises star and quasistar trees, shown in Figure \ref{fig:introduction:tree_types}(b,d)), and linear trees (Figure \ref{fig:introduction:tree_types}(c)) \cite{Ferrer2013a,Ferrer2021a}. In \cite{Nurse2018a,Nurse2019a}, Nurse {\em et. al.} proposed a $\bigO{n^{4d}}$-time algorithm to solve {\tt unconstrained MaxLA} on trees whose maximum degree is bounded by a constant $d$.

\begin{figure}
	\centering
\scalebox{1}{
\begin{tikzpicture}[ipe stylesheet]
  \pic
     at (96, 752) {ipe disk};
  \pic
     at (96, 720) {ipe disk};
  \pic
     at (124, 736) {ipe disk};
  \pic
     at (148, 736) {ipe disk};
  \pic
     at (176, 752) {ipe disk};
  \pic
     at (176, 720) {ipe disk};
  \draw
    (96, 752)
     -- (124, 736)
     -- (148, 736)
     -- (176, 752);
  \draw
    (148, 736)
     -- (176, 720);
  \draw
    (124, 736)
     -- (96, 720);
  \pic[ipe mark tiny]
     at (96, 744) {ipe disk};
  \pic[ipe mark tiny]
     at (96, 736) {ipe disk};
  \pic[ipe mark tiny]
     at (96, 728) {ipe disk};
  \pic[ipe mark tiny]
     at (176, 744) {ipe disk};
  \pic[ipe mark tiny]
     at (176, 736) {ipe disk};
  \pic[ipe mark tiny]
     at (176, 728) {ipe disk};
  \node[ipe node]
     at (84, 704) {$\left\lceil \frac{n-2}{2} \right\rceil$};
  \node[ipe node]
     at (160, 704) {$\left\lfloor \frac{n-2}{2} \right\rfloor$};
  \node[ipe node]
     at (128, 768) {$\BalancedBistar$};
  \pic
     at (240, 736) {ipe disk};
  \pic
     at (240.064, 755.791) {ipe disk};
  \pic
     at (256.7602, 746.8745) {ipe disk};
  \pic
     at (259.5113, 730.8426) {ipe disk};
  \pic
     at (247.3684, 717.2768) {ipe disk};
  \pic
     at (226.4988, 750.4787) {ipe disk};
  \draw
    (240, 736)
     -- (226.513, 750.547);
  \draw
    (240, 736)
     -- (240, 756);
  \draw
    (240, 736)
     -- (256.687, 746.963);
  \draw
    (240, 736)
     -- (259.369, 730.913);
  \draw
    (240, 736)
     -- (248, 716);
  \pic[ipe mark tiny]
     at (230.857, 718.105) {ipe disk};
  \pic[ipe mark tiny]
     at (223.1167, 725.2614) {ipe disk};
  \pic[ipe mark tiny]
     at (220.05, 736.2869) {ipe disk};
  \node[ipe node, font=\Large]
     at (80, 768) {a)};
  \node[ipe node, font=\Large]
     at (208, 768) {b)};
  \node[ipe node, font=\Large]
     at (208, 672) {d)};
  \pic
     at (96, 656) {ipe disk};
  \pic
     at (112, 656) {ipe disk};
  \pic
     at (160, 656) {ipe disk};
  \pic
     at (176, 656) {ipe disk};
  \pic[ipe mark tiny]
     at (128, 656) {ipe disk};
  \pic[ipe mark tiny]
     at (136, 656) {ipe disk};
  \pic[ipe mark tiny]
     at (144, 656) {ipe disk};
  \draw
    (96, 656)
     -- (120, 656);
  \draw
    (152, 656)
     -- (176, 656);
  \node[ipe node, font=\Large]
     at (80, 672) {c)};
  \pic
     at (240, 640) {ipe disk};
  \pic
     at (240.064, 659.791) {ipe disk};
  \pic
     at (256.7602, 650.874) {ipe disk};
  \pic
     at (259.5113, 634.843) {ipe disk};
  \pic
     at (247.3684, 621.277) {ipe disk};
  \pic
     at (226.4988, 654.479) {ipe disk};
  \draw
    (240, 640)
     -- (226.513, 654.547);
  \draw
    (240, 640)
     -- (240, 660);
  \draw
    (240, 640)
     -- (256.687, 650.963);
  \draw
    (240, 640)
     -- (259.369, 634.913);
  \draw
    (240, 640)
     -- (248, 620);
  \pic[ipe mark tiny]
     at (230.857, 622.105) {ipe disk};
  \pic[ipe mark tiny]
     at (223.1167, 629.2614) {ipe disk};
  \pic[ipe mark tiny]
     at (220.05, 640.2869) {ipe disk};
  \draw
    (271.689, 660.751)
     -- (256.691, 650.845);
  \pic
     at (271.6, 660.6993) {ipe disk};
  \node[ipe node]
     at (240, 768) {$\Star$};
  \node[ipe node]
     at (128, 672) {$\Linear$};
  \node[ipe node, font=\Large]
     at (80, 624) {e)};
  \node[ipe node]
     at (112, 624) {$\StarHub$};
  \pic
     at (128, 608) {ipe disk};
  \pic
     at (96, 576) {ipe disk};
  \pic
     at (112, 576) {ipe disk};
  \pic
     at (144, 576) {ipe disk};
  \pic
     at (160, 576) {ipe disk};
  \pic[ipe mark tiny]
     at (120, 576) {ipe disk};
  \pic[ipe mark tiny]
     at (128, 576) {ipe disk};
  \pic[ipe mark tiny]
     at (136, 576) {ipe disk};
  \draw
    (96, 576)
     -- (128, 608)
     -- (112, 576);
  \draw
    (144, 576)
     -- (128, 608)
     -- (160, 576);
  \draw
    (128, 608) circle[radius=4];
  \node[ipe node]
     at (240, 672) {$\QuasiStar$};
\end{tikzpicture}
}
	\caption{Illustration of several types of trees. a) Balanced bistar tree $\BalancedBistar$ of $n$ vertices. b) Star tree $\Star$. c) Linear tree $\Linear$. d) Quasi-star tree $\QuasiStar$ (a type of bistar tree). e) A star tree rooted at the hub $\StarHub$.}
	\label{fig:introduction:tree_types}
\end{figure}

Here we fill the gap in the literature on the calculation of {\tt MaxLA} for a given tree by devising algorithms that solve {\tt projective} and {\tt planar MaxLA} problems for trees in time and space $\bigO{n}$. Furthermore, we give the value of $\DMaxProj$ over all $\Rtree\in\RootedTrees$, where $\RootedTrees$ is the set of all $n$-vertex rooted trees, and generalize a previous extremal result in \cite{Ferrer2013a} by showing that the maximum value of $\DMaxPlan$ over all free trees $\Ftree\in\FreeTrees$ is achieved at least by caterpillar trees.

We begin by defining notation and concepts in Section \ref{sec:preliminaries}. In Section \ref{sec:outline}, we present an outline of an algorithm for trees unifying previous research on {\tt constrained minLA} and the new algorithms presented in this paper for solving {\tt constrained MaxLA}, thus highlighting common patterns. In Section \ref{sec:projective}, we show that {\tt projective MaxLA} can be solved in time and space $\bigO{n}$ using an strategy similar to that used to solve {\tt projective minLA}. We provide a $\bigO{n}$-time and $\bigO{n}$-space algorithm to solve {\tt planar MaxLA} in Section \ref{sec:planar}. Implementations of all algorithms presented here are available in the Linear Arrangement Library\footnote{Available publicly online at \url{https://github.com/LAL-project/linear-arrangement-library}.} \cite{Alemany2021a}.
\section{Preliminaries}
\label{sec:preliminaries}

We denote vertices as  $u,v,\Root,w,x,y,z$. We denote integers as $a,b,i,j,k,l$. We denote free trees as $\Ftree$ and a tree rooted at vertex $\Root$ as $\Rtree$; we consider the edges of a rooted tree to be oriented away from the root. We denote the degree of a vertex $u$ of a free tree $\Ftree$ as $\degree{u}$, equal to the number of neighbors, and the out-degree of a vertex of a rooted tree as $\outdegree{u}$, equal to the number of out-neighbors, or children, in a rooted tree $\Rtree$. Undirected edges are denoted as $uv$, and directed edges as $(u,v)$. We denote a subtree of $\Rtree$ rooted at $u\in V$ as $\SubRtree{u}$; subtree $\SubRtree{v}$ is an {\em immediate subtree} of $\SubRtree{u}$ if $v$ is a child of $u$. Let $\Nvert{v}{u}=|V(\SubRtree{u}[v])|$; it is easy to see that for any edge $uv$, $\Nvert{u}{v} + \Nvert{v}{u}=n$. We say that the {\em size of a child} $u$ of $\Root$ in $\Rtree$ is $\Nvert{\Root}{u}$. It will be useful to order the immediate subtrees of $\SubRtree{v}[u]$ according to size in a non-increasing manner. Let $\largestNvert{u}{v}{i}$ be the size of the $i$-th largest immediate subtree of $\SubRtree{v}[u]$. For the sake of brevity, let $\SubRtree{i}[u]$ be the $i$-th largest immediate subtree of $\Rtree[u]$.
\section{Outline of the algorithms}
\label{sec:outline}

The algorithms for all four problems {\tt projective}/{\tt planar} {\tt minLA}/{\tt MaxLA} on trees are all of a similar structure, which can be summarized with the following steps.
\begin{enumerate}
\item \label{step:outline:find_optimal_root} Find an optimal root $v$. In both projective variants, $v$ is the root of the tree. In {\tt planar minLA}, $v$ is a centroidal vertex \cite{Iordanskii1987a,Hochberg2003a,Alemany2022a}; there exists several algorithms to find it \cite{Goldman1971a,Hochberg2003a}\footnote{In Goldman's work \cite{Goldman1971a} there is an algorithm that can be adapted to compute the centroid of a tree and, at the same time, compute subtree sizes.}. In {\tt planar MaxLA}, a necessary condition for $v$ is to have a leaf attached (Lemma \ref{lem:planar:max_root_internal_with_leaves}); how to find $v$ is detailed in Algorithm \suppref{4.2} (Supplementary material).

\item \label{step:outline:subtree sizes} Calculate subtree sizes with respect to $v$.

\item \label{step:outline:top_down} In a top-down fashion, starting at $\Rtree[v]$,
	\begin{enumerate}
	\item Sort the immediate subtrees by their size. This step requires sorting in $\bigO{n}$; counting sort \cite{Cormen2001a} achieves this goal.
	
	\item Find an optimal position for the root of the current subtree. In {\tt planar minLA}, the immediate subtrees are arranged to both sides of the root in a balanced manner \cite{Hochberg2003a,Alemany2022a}. In {\tt planar MaxLA}, all the immediate subtrees are placed to the left (or to the right) of the root.
	
	\item Compute the interval $[a,b]$ in which each immediate subtree is to be arranged (Figure \ref{fig:introduction:shape_minimum_projective} depicts the arrangement in the minimization variants; Figure \ref{fig:projective:shape_maximum} depicts the arrangement in the maximization variants).
	
	\item Recursively apply step \ref{step:outline:top_down} to all immediate subtrees.
	\end{enumerate}
\end{enumerate}
\section{{\tt Projective MaxLA}}
\label{sec:projective}

It is easy to describe intuitively the shape of a maximum projective arrangement for $\Rtree$. Note that in order to avoid edge crossings the vertices of any subtree must be arranged in consecutive positions. The root of the subtree must be arranged on the leftmost (rightmost) position of the allowed interval to arrange the subtree. All the subtrees of the subtree being arranged must be arranged non-increasingly by size, the biggest one nearest the root and so on, maximizing the length of the edges starting at the root. For the initial tree we put the root, say, at the left side and the allowed interval is all the positions. This is a {\em right branching} arrangement; if the root is on the rightmost position, then it is a {\em left branching} arrangement. A glance at Figure \ref{fig:projective:shape_maximum} will dispel all remaining doubts. In Theorem \ref{thm:projective:shape_maximum}, we prove that the arrangement described is indeed  a maximum projective arrangement of a rooted tree.

\begin{theorem}
\label{thm:projective:shape_maximum}
A maximum projective arrangement of a rooted tree $\Rtree$ is such that
\begin{enumerate}[(i)]

\item \label{thm:projective:shape_maximum:3} the immediate subtrees of $\SubRtree{u}$ are arranged non-increasingly by size, with the smallest one farthest from $u$, all at the same side of $u$ in the arrangement, and

\item \label{thm:projective:shape_maximum:1} for every directed edge $(u,v)$, if the arrangement of $\SubRtree{u}$ is left (resp. right) branching then the arrangement of $\SubRtree{v}$ is right (resp. left)  branching.

\end{enumerate}
\end{theorem}
\begin{proof}
First, we prove (\ref{thm:projective:shape_maximum:3}) with a strategy similar to that by Hochberg and Stallmann for the minimization problem \cite[Lemma 6]{Hochberg2003a}. Consider a rooted tree $\Rtree$ and a projective arrangement of $\Rtree$ constructed from a permutation of its immediate subtrees and the root $\Root$. Note that projective (and planar) arrangements have to be constructed in this way, otherwise there would be crossings; thus the arrangements of all immediate subtrees are constructed likewise, down to the leaves. W.l.o.g., assume that in the permutation there are more (or equal number of) vertices to the right of $\Root$ than to the left of $\Root$.

We define two steps which, when applied to any such arrangement, the cost never decreases. {\em Step A} consists of swapping two subtrees in the permutation. For two trees to the right (or to the left) of the root, if the smaller one is nearer the root, we swap their positions in the permutation. This always increases the total cost of the arrangement. {\em Step B} consists of moving the leftmost subtree in the arrangement to the right side of the arrangement, as far as possible from the root. This increases the cost of the arrangement since we assumed, w.l.o.g., that there are more (or equal number of) vertices to the right of the tree's root than to its left.

The procedure to construct a maximum projective arrangement is as follows. {\em Step A} is applied to both sides of the arrangement of $\Rtree$, as often as needed until all subtrees are eventually arranged non-increasingly by size with the largest subtree closest to the root, and the smallest subtree farthest from it. Then, {\em step B} is applied until all subtrees to the left of the root have been placed to its right. Lastly, apply {\em step A} again as many times as needed to obtain a non-increasing order of all the subtrees with the largest subtree next to the root. These steps construct a right branching arrangement, but they are easily modifiable to construct a left branching arrangement by mirroring. Note that the order of two equally large subtrees in the permutation of intervals does not matter, since swapping them does not change the local cost of the subtrees, and the length of the edges from their roots to their root's parent does not change.

Second, we prove (\ref{thm:projective:shape_maximum:1}). We apply recursively the procedure above to the subtrees, alternating between right branching arrangement and left branching in order to maximize the cost. If a subtree has a left branching arrangement, each of its immediate subtrees $\SubRtree{u}$ must have a right branching arrangement and vice versa so as to maximize the length of the edge $(\Root, u)$.
\end{proof}

Now we list several properties that characterize maximum projective arrangements, all of which follow immediately from Theorem \ref{thm:projective:shape_maximum}.

\begin{corollary}
\label{cor:projective:characterizing_maximum_arrangements}
In a maximum projective arrangement $\arr$ of $\Rtree$,
\begin{enumerate}[(i)]
\item \label{cor:projective:characterizing_maximum_arrangements:i} The root $\Root$ is placed at one end of the linear arrangement, 
\item \label{cor:projective:characterizing_maximum_arrangements:ii} All leaves of $r$, if any, are arranged consecutively at the other end of the linear arrangement,
\item \label{cor:projective:characterizing_maximum_arrangements:iii} The first and last vertices of the arrangement are adjacent in the tree,
\item \label{cor:projective:characterizing_maximum_arrangements:iv} The arrangements of all immediate subtrees of any vertex branch towards the same direction.
\end{enumerate}
\end{corollary}

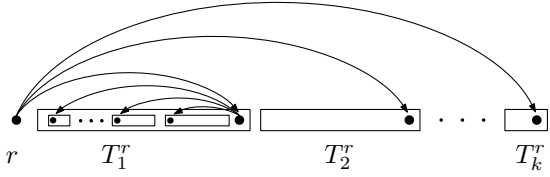
\begin{figure}
	\centering
\scalebox{1}{
\begin{tikzpicture}[ipe stylesheet]
  \pic
     at (69.5363, 700.4051) {ipe disk};
  \draw
    (77.5363, 704.4051) rectangle (157.5363, 696.4051);
  \draw
    (161.5363, 704.4051) rectangle (221.5363, 696.4051);
  \pic[ipe mark tiny]
     at (229.5363, 700.4051) {ipe disk};
  \pic[ipe mark tiny]
     at (237.5363, 700.4051) {ipe disk};
  \pic[ipe mark tiny]
     at (245.5363, 700.4051) {ipe disk};
  \draw
    (253.5363, 704.4051) rectangle (269.5363, 696.4051);
  \pic
     at (153.5363, 700.4051) {ipe disk};
  \pic
     at (217.5363, 700.4051) {ipe disk};
  \pic
     at (265.5363, 700.4051) {ipe disk};
  \draw[-{>[ipe arrow tiny]}]
    (69.5363, 702.1292)
     arc[start angle=-159.146, end angle=-24.1138, x radius=45.2236, y radius=-25.0485];
  \draw[-{>[ipe arrow tiny]}]
    (69.5365, 702.1292)
     arc[start angle=-162.0307, end angle=-19.7214, x radius=77.8228, y radius=-43.1046];
  \draw[-{>[ipe arrow tiny]}]
    (69.5364, 702.1292)
     arc[start angle=-166.239, end angle=-15.1744, x radius=100.966, y radius=-55.9233];
  \draw
    (125.6642, 702.2202) rectangle (149.6642, 698.2202);
  \draw
    (81.6642, 702.2202) rectangle (89.6642, 698.2202);
  \draw
    (105.6642, 702.2202) rectangle (121.6642, 698.2202);
  \pic[ipe mark tiny]
     at (93.7091, 700.1511) {ipe disk};
  \pic[ipe mark tiny]
     at (97.7091, 700.1511) {ipe disk};
  \pic[ipe mark tiny]
     at (101.7091, 700.1511) {ipe disk};
  \pic[ipe mark small]
     at (83.3593, 700.3178) {ipe disk};
  \pic[ipe mark small]
     at (107.5353, 700.2481) {ipe disk};
  \pic[ipe mark small]
     at (127.5373, 700.2351) {ipe disk};
  \draw[-{>[ipe arrow tiny]}]
    (153.5363, 702.1292)
     arc[start angle=43.3055, end angle=131.6209, x radius=17.6472, y radius=9.7745];
  \draw[-{>[ipe arrow tiny]}]
    (153.5367, 702.1294)
     arc[start angle=34.9791, end angle=141.6537, x radius=27.8893, y radius=15.4474];
  \draw
    (153.5363, 702.1292)
     arc[start angle=0, end angle=0, x radius=32, y radius=17.7242];
  \draw[-{>[ipe arrow tiny]}]
    (153.5362, 702.1292)
     arc[start angle=29.7693, end angle=147.9949, x radius=40.2288, y radius=22.282];
  \node[ipe node]
     at (65.536, 684.405) {$\Root$};
  \node[ipe node]
     at (101.536, 684.405) {$\SubRtree{1}$};
  \node[ipe node]
     at (185.536, 684.405) {$\SubRtree{2}$};
  \node[ipe node]
     at (257.536, 684.405) {$\SubRtree{k}$};
\end{tikzpicture}
}
	\caption{A maximum projective arrangement of $\Rtree$. The arrangement of $\Rtree$ is right branching, while the arrangements of $\SubRtree{1}, \dots, \SubRtree{k}$ are left branching.}
	\label{fig:projective:shape_maximum}
\end{figure}

The cost of the maximum projective arrangement of $\Rtree$ is, looking at Figure \ref{fig:projective:shape_maximum}, easy to derive. It is the sum of the maximum projective arrangements of the subtrees plus the sum of the lengths of the $k$ edges incident to the root to its children: the length of the first edge is the size of $\SubRtree{1}$, the length of the second edge is the sum of the sizes of $\SubRtree{1}$ and $\SubRtree{2}$, and so on. Corollary \ref{cor:projective:cost_max_pr_arr} formalizes this cost.

\begin{corollary}
\label{cor:projective:cost_max_pr_arr}
For any rooted tree $\Rtree$,
\begin{equation*}
\DMaxProj[\Rtree]
	=
	\sum_{i=1}^{\degree{\Root}} \DMaxProj[\SubRtree{i}]
		 + \sum_{i=1}^{\degree{\Root}}\sum_{j=1}^i \largestNvert{\Root}{\Root}{j}.
\end{equation*}
\end{corollary}

It is easy to devise an algorithm to solve {\tt projective MaxLA} from the description of the maximum arrangement and the formal proof in Theorem \ref{thm:projective:shape_maximum}. The algorithm can be viewed as a modification of an algorithm previously used to solve {\tt projective minLA} \cite[Algorithm 4.1]{Alemany2022a} where the intervals assigned to the subtrees are calculated differently.

\begin{corollary}
\label{cor:projective:algorithm}
There is an algorithm that, for any tree $\Rtree$, solves {\tt projective MaxLA} in time and space $\bigO{n}$, giving one of the possibly many maximum projective arrangements.
\end{corollary}
\begin{proof}
We detail such algorithm and give its proof of correctness and cost in the Supplementary material (Theorem \suppref{3.1}, Algorithm \suppref{3.1}).
\end{proof}

We finish this section with the maximum value of $\DMaxProj$ over all rooted trees $\Rtree\in\RootedTrees$. We show that it can be achieved by $\StarHub$ (Figure \ref{fig:introduction:tree_types}(e)), a star tree rooted at the vertex of maximum degree (the so-called {\em hub}).
\begin{property}
\label{proper:introduction:max_rooted_trees}
\begin{equation*}
\max_{\Rtree \in \RootedTrees} \{\DMaxProj\} = \DMaxProj[\StarHub] = {n \choose 2}.
\end{equation*}
\end{property}
\begin{proof}
The statement follows from combining two facts. First, it was shown that \cite{Ferrer2013a},
\begin{equation*}
\max_{\Ftree \in \FreeTrees} \{\DMaxPlan\} = {n \choose 2}.
\end{equation*}
And, since planarity is a generalization of projectivity, we have that
\begin{equation*}
\max_{\Rtree \in \RootedTrees} \{\DMaxProj\}
	\leq \max_{\Ftree \in \FreeTrees} \{\DMaxPlan\}
	 = {n \choose 2}.
\end{equation*}
Second, trivially \cite{Ferrer2015a}
\begin{equation*}
\DMaxProj[\StarHub] = \DMaxPlan[\Star] = {n \choose 2}.
\end{equation*}
Bear in mind that all linear arrangements of a star tree are planar because all pairs of edges share a vertex and thus they cannot cross, and besides, in a star tree rooted at the hub vertex, all linear arrangements are projective because the root cannot be covered.
\end{proof}
\section{{\tt Planar MaxLA}}
\label{sec:planar}

We first show that a maximum planar arrangement of a free tree is in fact a maximum projective arrangement for the same tree rooted at a certain vertex.
\begin{theorem}
\label{thm:planar:quadratic}
Given a free tree $\Ftree$, a maximum planar arrangement for $\Ftree$ is also a maximum projective arrangement for $\Rtree[u]$ for some $u\in V$. Formally,
\begin{equation*}
\DMaxPlan = \max_{u\in V} \{ \DMaxProj[\Rtree[u]] \}.
\end{equation*}
\end{theorem}
\begin{proof}
Let $\Pi_{\planar}(\Ftree,u)$ be the set of planar arrangements $\arr$ of $\Ftree$ such that $\arr(u)=1$, and let $\Pi_{\projective}(\Rtree[v],v)$ be the set of all projective arrangements $\arr'$ of $\Rtree[v]$ such that $\arr'(v)=1$. Notice that $\Pi_{\planar}(\Ftree,u)=\Pi_{\projective}(\Rtree[u],u)$ (see \cite{Alemany2022b} for further details on the relationship between planar arrangements and projective arrangements). Then it is easy to see that
\begin{align*}
\DMaxPlan
	&= \max_{ u\in V } \max_{ \arr\in\Pi_{\planar}(\Ftree,u) } \{ \D{\Ftree} \} \\
	&= \max_{ u\in V } \max_{ \arr\in\Pi_{\projective}(\Rtree[u],u) } \{ \D{\Rtree[u]} \} \\
	&= \max_{ u\in V } \{ \DMaxProj[\Rtree[u]] \}.
\end{align*}
\end{proof}

From Theorem \ref{thm:planar:quadratic}, we can calculate the maximum planar arrangement by rooting the tree in every vertex and keeping the arrangement that yields the maximum cost. It is easy to see that this strategy has time complexity $\bigO{n^2}$ for any tree. Later in this section we devise a $\bigO{n}$-time algorithm that follows the outline in Section \ref{sec:outline}. In order to achieve this, we first prove that calculating the cost of a maximum projective arrangement of a tree rooted at $v$ is relatively easy if we know the cost of the maximum projective arrangement for the same tree rooted at a neighbor of $v$.

\begin{lemma}
\label{lem:planar:relation_neighboring_vertices}
Let $\Ftree$ be a free tree. For any edge $uv\in E(\Ftree)$, it holds that
\begin{equation}
\label{eq:planar:relation_neighboring_vertices}
\DMaxProj[\Rtree[v]] - \DMaxProj[\Rtree[u]] = f(v,u) - f(u,v),
\end{equation}
where
\begin{equation*}
f(u,v) = [\degree{u} - j]\Nvert{u}{v} + \sum_{i=1}^{j} \singlelargestNvert{u}{i}
\end{equation*}
and $j$, $1\le j\le\degree{u}$, is the position of $v$ in the list of immediate subtrees of $\Rtree[u]$ sorted non-increasingly by size.
\end{lemma}
\begin{proof}
Let $k$ (resp. $l$) be the index of $v$ (resp. $u$) in the sorted list of neighbors of $u$ (resp. $v$), as depicted in Figures \ref{fig:planar:relation_neighboring_vertices}(a,b). Now, it is easy to see from Figure \ref{fig:planar:relation_neighboring_vertices}(c) that, for any edge $uv$, we can decompose $\DMaxProj[\Rtree[u]]$ into four parts: (blue) $\DMaxProj[\Rtree[u]\setminus\SubRtree{v}[u]]$, where $\Rtree[u]\setminus \SubRtree{v}[u]$ denotes the tree $\Rtree[u]$ without its immediate subtree $\SubRtree{v}[u]$, (red) $\DMaxProj[\SubRtree{v}[u]]$, (orange) the length of the edge from $u$ to $v$, and (green) the length of the segment of the edges from $u$ to $u_{k+1}$, \dots, $u_{\degree{u}}$ that cover the subtree $\SubRtree{v}[u]$ (which is not counted in $\DMaxProj[\Rtree[u]\setminus\SubRtree{v}[u]]$) and which is needed in order to insert $\SubRtree{v}[u]$ in the maximum projective arrangement of $\Rtree[u]\setminus\SubRtree{v}[u]$. More formally and in the same order,
\begin{align*}
\DMaxProj[\Rtree[u]]
	&= \DMaxProj[\Rtree[u]\setminus\SubRtree{v}[u]] + \DMaxProj[\SubRtree{v}[u]] \\
	&+ \sum_{i=1}^{k} \singlelargestNvert{u}{i} + (\degree{u} - k)\Nvert{u}{v}.
\end{align*}
Likewise for $\DMaxProj[\Rtree[v]]$. Now, since $\Rtree[u]\setminus\SubRtree{v}[u] = \SubRtree{u}[v]$ and $\Rtree[v]\setminus\SubRtree{u}[v] = \SubRtree{v}[u]$ (Figure \ref{fig:planar:relation_neighboring_vertices}(a,b)), it is easy to see that
\begin{equation*}
\DMaxProj[\Rtree[v]\setminus\SubRtree{u}[v]] + \DMaxProj[\SubRtree{u}[v]] = 
\DMaxProj[\Rtree[u]\setminus\SubRtree{v}[u]] + \DMaxProj[\SubRtree{v}[u]].
\end{equation*}
Hence Equation \ref{eq:planar:relation_neighboring_vertices}.
\end{proof}

\begin{figure*}
	\centering
\scalebox{1}{
\begin{tikzpicture}[ipe stylesheet]
  \pic
     at (144, 604) {ipe disk};
  \node[ipe node]
     at (132, 612) {$u=v_l$};
  \node[ipe node]
     at (44, 572) {$u_1$};
  \node[ipe node]
     at (96, 568) {$u_{k-1}$};
  \node[ipe node]
     at (176, 568) {$u_{k+1}$};
  \node[ipe node]
     at (232, 576) {$u_{\degree{u}}$};
  \pic
     at (60, 572) {ipe disk};
  \pic
     at (92, 572) {ipe disk};
  \pic
     at (204, 572) {ipe disk};
  \pic
     at (236, 572) {ipe disk};
  \filldraw[pattern=north east lines, pattern color=blue]
    (60, 572)
     -- (48, 524)
     -- (72, 524)
     -- cycle;
  \filldraw[pattern=north east lines, pattern color=blue]
    (92, 572)
     -- (80, 524)
     -- (104, 524)
     -- cycle;
  \filldraw[pattern=north east lines, pattern color=blue]
    (204, 572)
     -- (192, 524)
     -- (216, 524)
     -- cycle;
  \filldraw[pattern=north east lines, pattern color=blue]
    (236, 572)
     -- (224, 524)
     -- (248, 524)
     -- cycle;
  \node[ipe node]
     at (145.598, 578.719) {$u_k=v$};
  \draw
    (144, 604)
     -- (60, 572);
  \draw
    (144, 604)
     -- (92, 572);
  \draw
    (144, 604)
     -- (204, 572);
  \draw
    (144, 604)
     -- (236, 572);
  \pic
     at (144, 572) {ipe disk};
  \node[ipe node]
     at (84, 500) {$v_1$};
  \node[ipe node]
     at (108, 500) {$v_{l-1}$};
  \pic
     at (96, 496) {ipe disk};
  \pic
     at (128, 496) {ipe disk};
  \filldraw[pattern=north west lines, pattern color=red]
    (96, 496)
     -- (84, 448)
     -- (108, 448)
     -- cycle;
  \filldraw[pattern=north west lines, pattern color=red]
    (128, 496)
     -- (116, 448)
     -- (140, 448)
     -- cycle;
  \pic[ipe mark tiny]
     at (72, 572) {ipe disk};
  \pic[ipe mark tiny]
     at (76, 572) {ipe disk};
  \pic[ipe mark tiny]
     at (80, 572) {ipe disk};
  \pic[ipe mark tiny]
     at (216, 572) {ipe disk};
  \pic[ipe mark tiny]
     at (220, 572) {ipe disk};
  \pic[ipe mark tiny]
     at (224, 572) {ipe disk};
  \pic[ipe mark tiny]
     at (108, 496) {ipe disk};
  \pic[ipe mark tiny]
     at (112, 496) {ipe disk};
  \pic[ipe mark tiny]
     at (116, 496) {ipe disk};
  \node[ipe node]
     at (140, 500) {$v_{l+1}$};
  \node[ipe node]
     at (192, 500) {$v_{\degree{v}}$};
  \pic
     at (160, 496) {ipe disk};
  \pic
     at (192, 496) {ipe disk};
  \filldraw[pattern=north west lines, pattern color=red]
    (160, 496)
     -- (148, 448)
     -- (172, 448)
     -- cycle;
  \filldraw[pattern=north west lines, pattern color=red]
    (192, 496)
     -- (180, 448)
     -- (204, 448)
     -- cycle;
  \pic[ipe mark tiny]
     at (172, 496) {ipe disk};
  \pic[ipe mark tiny]
     at (176, 496) {ipe disk};
  \pic[ipe mark tiny]
     at (180, 496) {ipe disk};
  \draw
    (144, 604)
     -- (144, 572);
  \draw
    (144, 572)
     -- (96, 496);
  \draw
    (144, 572)
     -- (128, 496);
  \draw
    (144, 572)
     -- (160, 496);
  \draw
    (144, 572)
     -- (192, 496);
  \pic
     at (392, 604) {ipe disk};
  \node[ipe node]
     at (393.278, 578.241) {$v_l=u$};
  \node[ipe node]
     at (332, 500) {$u_1$};
  \node[ipe node]
     at (352, 500) {$u_{k-1}$};
  \node[ipe node]
     at (384, 500) {$u_{k+1}$};
  \node[ipe node]
     at (440, 500) {$u_{\degree{u}}$};
  \pic
     at (308, 572) {ipe disk};
  \pic
     at (340, 572) {ipe disk};
  \pic
     at (452, 572) {ipe disk};
  \pic
     at (484, 572) {ipe disk};
  \filldraw[pattern=north west lines, pattern color=red]
    (308, 572)
     -- (296, 524)
     -- (320, 524)
     -- cycle;
  \filldraw[pattern=north west lines, pattern color=red]
    (340, 572)
     -- (328, 524)
     -- (352, 524)
     -- cycle;
  \filldraw[pattern=north west lines, pattern color=red]
    (452, 572)
     -- (440, 524)
     -- (464, 524)
     -- cycle;
  \filldraw[pattern=north west lines, pattern color=red]
    (484, 572)
     -- (472, 524)
     -- (496, 524)
     -- cycle;
  \node[ipe node]
     at (380, 612) {$v=u_k$};
  \draw
    (392, 604)
     -- (308, 572);
  \draw
    (392, 604)
     -- (340, 572);
  \draw
    (392, 604)
     -- (452, 572);
  \draw
    (392, 604)
     -- (484, 572);
  \pic
     at (392, 572) {ipe disk};
  \node[ipe node]
     at (292, 572) {$v_1$};
  \node[ipe node]
     at (348, 568) {$v_{l-1}$};
  \pic
     at (344, 496) {ipe disk};
  \pic
     at (376, 496) {ipe disk};
  \filldraw[pattern=north east lines, pattern color=blue]
    (344, 496)
     -- (332, 448)
     -- (356, 448)
     -- cycle;
  \filldraw[pattern=north east lines, pattern color=blue]
    (376, 496)
     -- (364, 448)
     -- (388, 448)
     -- cycle;
  \pic[ipe mark tiny]
     at (320, 572) {ipe disk};
  \pic[ipe mark tiny]
     at (324, 572) {ipe disk};
  \pic[ipe mark tiny]
     at (328, 572) {ipe disk};
  \pic[ipe mark tiny]
     at (464, 572) {ipe disk};
  \pic[ipe mark tiny]
     at (468, 572) {ipe disk};
  \pic[ipe mark tiny]
     at (472, 572) {ipe disk};
  \pic[ipe mark tiny]
     at (356, 496) {ipe disk};
  \pic[ipe mark tiny]
     at (360, 496) {ipe disk};
  \pic[ipe mark tiny]
     at (364, 496) {ipe disk};
  \node[ipe node]
     at (428, 568) {$v_{l+1}$};
  \node[ipe node]
     at (480, 576) {$v_{\degree{v}}$};
  \pic
     at (408, 496) {ipe disk};
  \pic
     at (440, 496) {ipe disk};
  \filldraw[pattern=north east lines, pattern color=blue]
    (408, 496)
     -- (396, 448)
     -- (420, 448)
     -- cycle;
  \filldraw[pattern=north east lines, pattern color=blue]
    (440, 496)
     -- (428, 448)
     -- (452, 448)
     -- cycle;
  \pic[ipe mark tiny]
     at (420, 496) {ipe disk};
  \pic[ipe mark tiny]
     at (424, 496) {ipe disk};
  \pic[ipe mark tiny]
     at (428, 496) {ipe disk};
  \draw
    (392, 604)
     -- (392, 572);
  \draw
    (392, 572)
     -- (344, 496);
  \draw
    (392, 572)
     -- (376, 496);
  \draw
    (392, 572)
     -- (408, 496);
  \draw
    (392, 572)
     -- (440, 496);
  \node[ipe node, font=\Large]
     at (44, 608) {a)};
  \node[ipe node, font=\Large]
     at (292, 608) {b)};
  \pic[ipe mark tiny]
     at (280, 376) {ipe disk};
  \pic[ipe mark tiny]
     at (284, 376) {ipe disk};
  \pic[ipe mark tiny]
     at (288, 376) {ipe disk};
  \filldraw[pattern=north west lines, pattern color=red]
    (252, 378) rectangle (276, 374);
  \pic
     at (48, 376) {ipe disk};
  \pic
     at (392, 376) {ipe disk};
  \draw
    (250, 380) rectangle (398, 372);
  \pic[ipe mark tiny]
     at (460, 376) {ipe disk};
  \pic[ipe mark tiny]
     at (464, 376) {ipe disk};
  \pic[ipe mark tiny]
     at (468, 376) {ipe disk};
  \pic[ipe mark tiny]
     at (148, 376) {ipe disk};
  \pic[ipe mark tiny]
     at (152, 376) {ipe disk};
  \pic[ipe mark tiny]
     at (156, 376) {ipe disk};
  \filldraw[pattern=north east lines, pattern color=blue]
    (160, 380) rectangle (248, 372);
  \filldraw[pattern=north east lines, pattern color=blue]
    (400, 380) rectangle (456, 372);
  \filldraw[pattern=north east lines, pattern color=blue]
    (472, 380) rectangle (516, 372);
  \node[ipe node]
     at (132, 356) {$u_1$};
  \node[ipe node]
     at (228, 356) {$u_{k-1}$};
  \node[ipe node]
     at (436, 356) {$u_{k+1}$};
  \node[ipe node]
     at (494, 356) {$u_{\degree{u}}$};
  \node[ipe node]
     at (252, 366) {$v_1$};
  \node[ipe node]
     at (292, 366) {$v_{l-1}$};
  \node[ipe node]
     at (320, 366) {$v_{l+1}$};
  \node[ipe node]
     at (360, 366) {$v_{\degree{v}}$};
  \filldraw[pattern=north west lines, pattern color=red]
    (292, 378) rectangle (316, 374);
  \pic[ipe mark tiny]
     at (348, 376) {ipe disk};
  \pic[ipe mark tiny]
     at (352, 376) {ipe disk};
  \pic[ipe mark tiny]
     at (356, 376) {ipe disk};
  \pic[ipe mark small]
     at (256, 376) {ipe disk};
  \pic[ipe mark small]
     at (296, 376) {ipe disk};
  \filldraw[pattern=north west lines, pattern color=red]
    (320, 378) rectangle (344, 374);
  \pic[ipe mark small]
     at (324, 376) {ipe disk};
  \filldraw[pattern=north west lines, pattern color=red]
    (360, 378) rectangle (384, 374);
  \pic[ipe mark small]
     at (364, 376) {ipe disk};
  \filldraw[pattern=north east lines, pattern color=blue]
    (56, 380) rectangle (144, 372);
  \draw[blue, -{ipe pointed[ipe arrow small]}]
    (48, 376)
     arc[start angle=180, end angle=347.0272, x radius=43.5558, y radius=-17.8183];
  \draw[blue, -{ipe pointed[ipe arrow small]}]
    (47.9995, 376)
     arc[start angle=-180, end angle=-8.4093, x radius=95.5135, y radius=-27.3516];
  \draw[orange, -{ipe pointed[ipe arrow small]}]
    (48, 376)
     arc[start angle=180, end angle=351.2804, x radius=171.994, y radius=-39.5782];
  \draw[blue, -{ipe pointed[ipe arrow small]}]
    (48, 376)
     arc[start angle=-180, end angle=-4.8866, x radius=199.362, y radius=-46.9571];
  \draw[blue, -{ipe pointed[ipe arrow small]}]
    (48, 376)
     arc[start angle=180, end angle=355.808, x radius=229.307, y radius=-54.7211];
  \draw[red, -{ipe pointed[ipe arrow tiny]}]
    (391.9997, 376)
     arc[start angle=0, end angle=160.6188, x radius=13.7887, y radius=5.9094];
  \draw[red, -{ipe pointed[ipe arrow tiny]}]
    (392, 376)
     arc[start angle=0, end angle=167.0872, x radius=33.726, y radius=8.6724];
  \draw[red, -{ipe pointed[ipe arrow tiny]}]
    (392.0001, 376)
     arc[start angle=0, end angle=170.5018, x radius=47.7931, y radius=12.2897];
  \draw[red, -{ipe pointed[ipe arrow tiny]}]
    (391.9996, 376)
     arc[start angle=-0.2009, end angle=173.4001, x radius=67.671, y radius=17.4011];
  \node[ipe node]
     at (390, 366) {$v$};
  \node[ipe node]
     at (44, 356) {$u$};
  \node[ipe node]
     at (388, 356) {$u_k$};
  \node[ipe node, font=\Large]
     at (44, 432) {c)};
  \pic
     at (136, 376) {ipe disk};
  \pic
     at (240, 376) {ipe disk};
  \pic
     at (448, 376) {ipe disk};
  \pic
     at (508, 376) {ipe disk};
  \draw[ipe dash dashed]
    (250, 368)
     -- (250, 438);
  \draw[ipe dash dashed]
    (398, 368)
     -- (398, 438);
  \draw[green, ipe pen fat]
    (249.992, 422.928)
     .. controls (257.106, 422.739) and (260.5215, 422.6915) .. (263.7154, 422.6361)
     .. controls (266.9093, 422.5807) and (269.8817, 422.5173) .. (272.759, 422.4383)
     .. controls (275.6363, 422.3593) and (278.4187, 422.2647) .. (281.1222, 422.154)
     .. controls (283.8257, 422.0433) and (286.4503, 421.9167) .. (288.9168, 421.7902)
     .. controls (291.3833, 421.6637) and (293.6917, 421.5373) .. (296.4067, 421.3908)
     .. controls (299.1217, 421.2443) and (302.2433, 421.0777) .. (305.2093, 420.9007)
     .. controls (308.1753, 420.7237) and (310.9857, 420.5363) .. (314.326, 420.2402)
     .. controls (317.6663, 419.944) and (321.5367, 419.539) .. (324.9563, 419.2013)
     .. controls (328.376, 418.8637) and (331.345, 418.5933) .. (333.873, 418.3083)
     .. controls (336.401, 418.0233) and (338.488, 417.7237) .. (340.8138, 417.4162)
     .. controls (343.1397, 417.1087) and (345.7043, 416.7933) .. (348.1723, 416.4617)
     .. controls (350.6403, 416.13) and (353.0117, 415.782) .. (355.4762, 415.4495)
     .. controls (357.9407, 415.117) and (360.4983, 414.8) .. (362.9453, 414.4252)
     .. controls (365.3923, 414.0503) and (367.7287, 413.6177) .. (369.762, 413.2788)
     .. controls (371.7953, 412.94) and (373.5257, 412.695) .. (375.22, 412.385)
     .. controls (376.9143, 412.075) and (378.5727, 411.7) .. (380.3753, 411.3322)
     .. controls (382.178, 410.9643) and (384.125, 410.6037) .. (386.6993, 410.0341)
     .. controls (389.2735, 409.4645) and (392.475, 408.686) .. (397.623, 407.518);
  \draw[green, ipe pen fat]
    (249.988, 430.389)
     .. controls (258.146, 430.673) and (262.889, 430.7205) .. (266.9207, 430.7443)
     .. controls (270.9523, 430.768) and (274.2727, 430.768) .. (277.2608, 430.7522)
     .. controls (280.249, 430.7363) and (282.905, 430.7047) .. (285.7825, 430.6732)
     .. controls (288.66, 430.6417) and (291.759, 430.6103) .. (294.9368, 430.5472)
     .. controls (298.1147, 430.484) and (301.3713, 430.389) .. (303.7428, 430.3098)
     .. controls (306.1143, 430.2307) and (307.6007, 430.1673) .. (309.5928, 430.1198)
     .. controls (311.585, 430.0723) and (314.083, 430.0407) .. (316.5178, 429.9458)
     .. controls (318.9527, 429.851) and (321.3243, 429.693) .. (324.107, 429.5192)
     .. controls (326.8897, 429.3453) and (330.0833, 429.1557) .. (333.0398, 428.9975)
     .. controls (335.9963, 428.8393) and (338.7157, 428.7127) .. (341.2453, 428.5387)
     .. controls (343.775, 428.3647) and (346.115, 428.1433) .. (348.4548, 427.9378)
     .. controls (350.7947, 427.7323) and (353.1343, 427.5427) .. (355.7113, 427.3057)
     .. controls (358.2883, 427.0687) and (361.1027, 426.7843) .. (363.766, 426.5828)
     .. controls (366.4293, 426.3813) and (368.9417, 426.2627) .. (371.6673, 426.041)
     .. controls (374.393, 425.8193) and (377.332, 425.4947) .. (379.9423, 425.1443)
     .. controls (382.5527, 424.794) and (384.8343, 424.418) .. (387.4362, 424.0763)
     .. controls (390.038, 423.7345) and (392.96, 423.427) .. (397.677, 422.735);
\end{tikzpicture}
}
	\caption{Proof of Lemma \ref{lem:planar:relation_neighboring_vertices}. a) A free tree $\Ftree$ rooted at $u$. b) A free tree $\Ftree$ rooted at $v$. c) A maximum projective arrangement of $\Rtree[u]$. The vertices $u$ and $v$ are adjacent in the tree, that is $uv\in E$. The children of both $u$ and $v$ are ordered: $u_i$ is the $i$-th largest child of $u$; similarly for $v$. Then vertex $v$ is the $k$-th largest child of $u$; vertex $u$ is the $l$-th largest child of $v$. Same names for vertices indicate equal vertices.}
	\label{fig:planar:relation_neighboring_vertices}
\end{figure*}
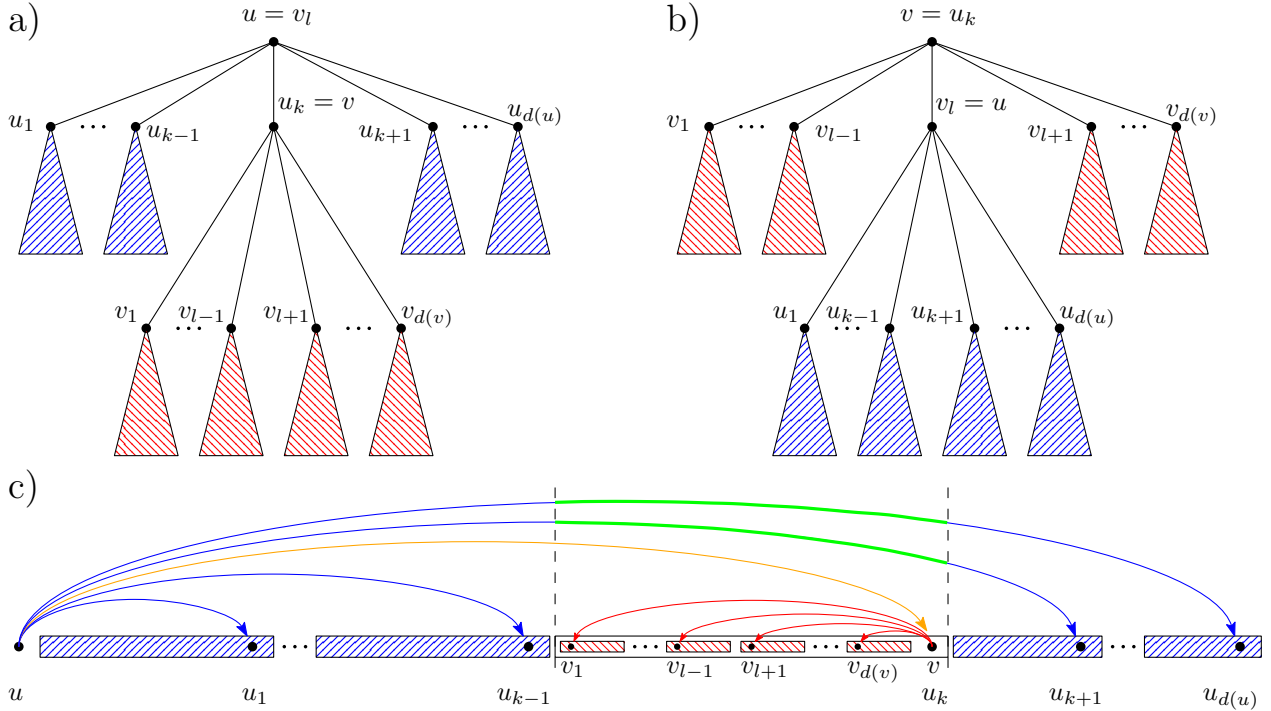


Now we explain how to use Lemma \ref{lem:planar:relation_neighboring_vertices} to solve {\tt planar MaxLA}. In order to calculate efficiently the value $\DMaxProj[\Rtree[v]]$ using the value $\DMaxProj[\Rtree[u]]$, we use a data structure, denoted as $\magadjlist$, similar to an adjacency list. $\magadjlist$ has one entry for each vertex. For any $u\in V$, $\magadjlist[u]$ contains $\degree{u}$-many tuples, $\magadjlist[u][i]$ where $1\le i\le\degree{u}$, each of five elements. Let $v_1,\dots,v_{\degree{u}}$ be the neighbors of $u$ ordered non-increasingly by size. The tuple relative to $(u,v_i)\in E(\Rtree[u])$ is $\magadjlist[u][i]$ and contains (1) $v_i$, (2) the size of $\SubRtree{i}[u]$, (3) the index $i$ which is the position of $v$ in the list of sorted neighbors of $u$, (4) the position of $u$ in the list of sorted neighbors of $v_i$, and (5) the sum of the sizes of the subtrees $\SubRtree{1}[u],\dots,\SubRtree{i-1}[u]$. See the proof of Theorem \suppref{4.1} in the Supplementary Materials for further details. The following theorem outlines the algorithm.

\begin{theorem}
\label{thm:planar:algorithm}
There is an algorithm that, for any tree $\Ftree$, solves {\tt planar MaxLA} in time and space $\bigO{n}$ giving one of the possibly many maximum planar arrangements.
\end{theorem}
\begin{proof}
The algorithm (Algorithm \suppref{4.1}, Supplementary material) first builds structure $\magadjlist$, in order to apply Equation \ref{eq:planar:relation_neighboring_vertices} in constant time. Its construction is done in time and space $\bigO{n}$. The optimum root can be found with a BFS traversal on the tree starting at an arbitrary vertex $w$ with the aid of $\magadjlist$. The value of $\DMaxProj[\Rtree[w]]$ is calculated in $\bigO{n}$ (Corollary \ref{cor:projective:algorithm}). Every step of the traversal has cost $\bigO{1}$ using $\magadjlist$ and Lemma \ref{lem:planar:relation_neighboring_vertices}. Once the optimum root is found, say $z$, we already know the cost, and a maximum planar arrangement is calculated in $\bigO{n}$ as a maximum projective arrangement for $\Rtree[z]$. We detail such algorithm and give its proof of correctness and cost in the Supplementary material (Theorem \suppref{4.1}, Algorithm \suppref{4.1}).
\end{proof}


In the algorithm above, we showed that the optimum root to solve {\tt planar MaxLA} can be found in $\bigO{n}$. Although the algorithm is efficient, it is also difficult to understand since it needs to build $\magadjlist$, traverse the whole tree to find the optimum root, and there is no immediate knowledge on the kind of vertex that it finds. Ideally, the optimum root should be searched using a characterization easier to understand and calculate as it happens in {\tt planar minLA} for trees where the optimal root is a centroidal vertex. We now identify a necessary condition for a vertex to be an optimum root to solve {\tt planar MaxLA}.


Let $\maxroots$ be the set of vertices of $\Ftree$ for which Theorem \ref{thm:planar:quadratic} holds. More formally,
\begin{equation}
\label{eq:planar:vstar}
\maxroots = \{v\in V \;|\; \DMaxPlan=\DMaxProj[\Rtree[v]]\}.
\end{equation}
We now present a characterization of the vertices in $\maxroots$.

\begin{lemma}
\label{lem:planar:max_root_internal_with_leaves}
For any free tree $\Ftree$, a vertex $v\in\maxroots$ is either a leaf or an internal vertex with some leaves adjacent.
\end{lemma}
\begin{proof}
Notice that for $n\le4$ this is trivially true. In the following analysis we assume that $n\ge5$.

Let $w$ be an internal vertex that has no leaf adjacent and let $z$ be the smallest child of $w$. Here we apply Lemma \ref{lem:planar:relation_neighboring_vertices} to show that $\DMaxProj[\Rtree[w]]<\DMaxProj[\Rtree[z]]$. By repeatedly choosing the smallest child starting at $w$ we get a sequence of vertices $u_1$, $\dots$, $u_j$, $u_{j+1}$ in which (1) $u_{i+1}$ is the smallest child of $u_i$, (2) $\DMaxProj[\Rtree[u_i]]<\DMaxProj[\Rtree[u_{i+1}]]$, and (3) $u_j$ is the first to have a leaf adjacent. We can stop at $u_j$ because, by Corollary \ref{cor:projective:characterizing_maximum_arrangements}(\ref{cor:projective:characterizing_maximum_arrangements:ii}), we have that $\DMaxProj[\Rtree[u_j]]=\DMaxProj[\Rtree[u_{j+1}]]$, where $u_{j+1}$ is any leaf of $u_j$.

Now, let $k=\outdegree{z}[w]$. By our assumption that $z$ is not a leaf, $k\ge1$. Let $\arr_1$ be a maximum projective arrangement for $\Rtree[w]$ (Figure \ref{fig:improvable_arrangement}(a)), and let $\arr_2$ be a maximum projective arrangement for $\Rtree[z]$ (Figure \ref{fig:improvable_arrangement}(b)). It is easy to see that $\DMaxProj[\SubRtree{w_i}[w]]$ for all neighbors $w_i$ of $w$ (except $z$) do not change nor does the length of the edges $ww_i$ from $\arr_1$ to $\arr_2$ ($1\le i<\degree{w}$). Then, due to Lemma \ref{lem:planar:relation_neighboring_vertices}, $\DMaxProj[\Rtree[z]] - \DMaxProj[\Rtree[w]] = (k + 1)\Nvert{z}{w} - (n - 1)$ and $\DMaxProj[\Rtree[z]] > \DMaxProj[\Rtree[w]]$ iff,
\begin{equation}
\label{eq:planar:difference_w_to_nonleaf}
k\Nvert{z}{w} > \Nvert{w}{z} - 1.
\end{equation}
If $z$ is not a leaf, Equation \ref{eq:planar:difference_w_to_nonleaf} holds iff $\Nvert{z}{w} > \Nvert{w}{z}$ which trivially holds since $z$ is the smallest child of $w$ and thus subtree $\SubRtree{w}[z]$ is the largest immediate subtree of $\Rtree[z]$. Now, if $z$ was a leaf, that is, if $w$ had an adjacent leaf, it would be easy to see that $\arr_2$ is just the reverse of $\arr_1$. Thus improvement stops at the vertex that has a leaf adjacent to it. More precisely, if $z$ is a leaf Equation \ref{eq:planar:difference_w_to_nonleaf} does not hold since $k=0$ and $\Nvert{w}{z}=1$.
\end{proof}

\begin{figure}
	\centering
\scalebox{0.8}{
\begin{tikzpicture}[ipe stylesheet]
  \pic
     at (142, 696) {ipe disk};
  \node[ipe node]
     at (138, 680) {$w$};
  \draw
    (150, 700) rectangle (230, 692);
  \pic[ipe mark tiny]
     at (234, 696) {ipe disk};
  \pic[ipe mark tiny]
     at (238, 696) {ipe disk};
  \pic[ipe mark tiny]
     at (242, 696) {ipe disk};
  \node[ipe node]
     at (220, 680) {$w_1$};
  \draw
    (246, 700) rectangle (310, 692);
  \node[ipe node]
     at (290, 680) {$w_{j-1}$};
  \draw
    (326, 700) rectangle (402, 692);
  \pic
     at (398, 696) {ipe disk};
  \pic[ipe mark small]
     at (330, 696) {ipe disk};
  \draw
    (328, 698) rectangle (336, 694);
  \pic[ipe mark small]
     at (378, 696) {ipe disk};
  \draw
    (376, 698) rectangle (394, 694);
  \draw
    (141.9999, 696.0001)
     arc[start angle=-135.1971, end angle=-50.2557, x radius=62.2733, y radius=-62.2733];
  \draw
    (142.0004, 695.9996)
     arc[start angle=-135.7073, end angle=-47.0872, x radius=117.422, y radius=-117.422];
  \pic[ipe mark tiny]
     at (358, 696) {ipe disk};
  \pic[ipe mark tiny]
     at (352.0754, 696) {ipe disk};
  \pic[ipe mark tiny]
     at (364.1899, 696) {ipe disk};
  \draw
    (141.9997, 696.0002)
     arc[start angle=-137.2771, end angle=-44.5134, x radius=176.829, y radius=-176.829];
  \pic[ipe mark small]
     at (306, 696) {ipe disk};
  \pic[ipe mark small]
     at (226, 696) {ipe disk};
  \draw
    (330, 700)
     arc[start angle=-135, end angle=-45, x radius=48.0832, y radius=-48.0832];
  \draw
    (378.0003, 699.9998)
     arc[start angle=-142.4289, end angle=-37.5711, x radius=12.6164, y radius=-12.6164];
  \node[ipe node]
     at (118, 696) {$\arr_1$};
  \node[ipe node]
     at (138, 590) {$z$};
  \pic
     at (142, 606) {ipe disk};
  \node[ipe node]
     at (118, 606) {$\arr_2$};
  \node[ipe node, font=\Large]
     at (118, 740) {a)};
  \node[ipe node, font=\Large]
     at (118, 654) {b)};
  \node[ipe node]
     at (236, 680) {$\ge\cdots\ge$};
  \node[ipe node]
     at (314, 680) {$\ge$};
  \draw
    (156, 608) rectangle (210, 604);
  \pic[ipe mark small]
     at (158, 606) {ipe disk};
  \node[ipe node]
     at (154, 590) {$w_{j-1}$};
  \pic[ipe mark tiny]
     at (214, 606) {ipe disk};
  \pic[ipe mark tiny]
     at (220, 606) {ipe disk};
  \pic[ipe mark tiny]
     at (226, 606) {ipe disk};
  \draw
    (232, 608) rectangle (310, 604);
  \pic[ipe mark small]
     at (234, 606) {ipe disk};
  \node[ipe node]
     at (230, 590) {$w_1$};
  \node[ipe node]
     at (190, 590) {$\le\cdots\le$};
  \draw
    (154, 610) rectangle (320, 602);
  \pic
     at (316, 606) {ipe disk};
  \node[ipe node]
     at (312, 590) {$w$};
  \pic[ipe mark small]
     at (342, 606) {ipe disk};
  \draw
    (326, 610) rectangle (346, 602);
  \pic[ipe mark tiny]
     at (364, 604) {ipe disk};
  \pic[ipe mark tiny]
     at (358.075, 604) {ipe disk};
  \pic[ipe mark tiny]
     at (370.19, 604) {ipe disk};
  \node[ipe node]
     at (390, 590) {$z_k$};
  \node[ipe node]
     at (338, 590) {$z_1$};
  \node[ipe node]
     at (350, 590) {$\ge\cdots\ge$};
  \pic[ipe mark small]
     at (398, 606) {ipe disk};
  \draw
    (390, 610) rectangle (402, 602);
  \draw
    (234.0003, 609.9996)
     arc[start angle=-129.7444, end angle=-50.2556, x radius=64.1258, y radius=-64.1258];
  \draw
    (158, 609.9999)
     arc[start angle=-133.409, end angle=-46.591, x radius=114.959, y radius=-114.959];
  \draw
    (142.0001, 606)
     arc[start angle=-142.296, end angle=-40.338, x radius=112.011, y radius=-112.011];
  \draw
    (141.9994, 606.0005)
     arc[start angle=-142.2955, end angle=-39.9956, x radius=128.43, y radius=-128.43];
  \draw
    (142.0006, 605.9995)
     arc[start angle=-142.2961, end angle=-39.4944, x radius=163.801, y radius=-163.801];
  \node[ipe node]
     at (338, 680) {$\le\cdots\le$};
  \node[ipe node]
     at (326, 680) {$z_k$};
  \node[ipe node]
     at (378, 680) {$z_1$};
  \node[ipe node]
     at (396, 680) {$z$};
  \node[ipe node]
     at (326, 590) {$\ge$};
\end{tikzpicture}
}
	\caption{Proof of Lemma \ref{lem:planar:max_root_internal_with_leaves}. Vertex $w$ has neighbors $w_1$, $\dots$, $w_j$; no neighbor of $w$ is a leaf. Vertex $z=w_j$ has children $z_1$, $\dots$, $z_k$ in $\Rtree[w]$. The comparison symbols $\ge$ and $\le$ in the diagram indicate order of size of the subtrees rooted at the vertices at every side of the comparison symbols. a) A maximum projective arrangement $\arr_1$ of $\Rtree[w]$. b) A maximum projective arrangement $\arr_2$ of $\Rtree[z]$.}
	\label{fig:improvable_arrangement}
\end{figure}
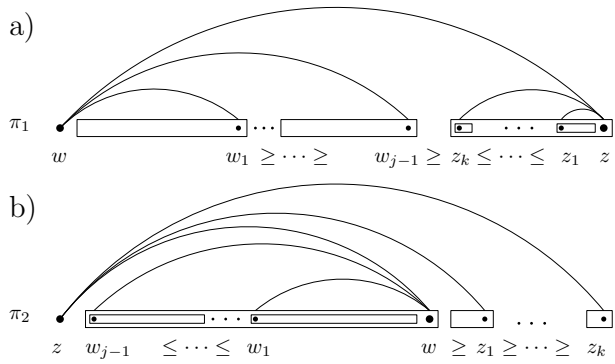

The proof of Lemma \ref{lem:planar:max_root_internal_with_leaves} does not guarantee that repeatedly choosing the smallest child yields $\DMaxPlan$. It only guarantees improvement over the starting vertex when said vertex has no leaves attached. Furthermore, from Lemma \ref{lem:planar:max_root_internal_with_leaves}, $2\le|\maxroots|$ (when $n \geq 2$); moreover, in the worst case, that is, in a star tree, $|\maxroots|=n$. Since the vertices in $\maxroots$ are either leaves or their only neighbor, we can construct, in time $\bigO{n}$, the set $\maxrootsk\subseteq V(\Ftree)$ of vertices that have a leaf attached. It is easy to see that $1\le|\maxrootsk|\le \min\{n/2,\numleaves\}$, where $\numleaves$ is the number of leaves of $\Ftree$. In order to solve {\tt planar MaxLA}, we can find the maximum root by checking all vertices in $\maxrootsk$ in time $\bigO{|\maxrootsk|n}$.

The next corollary highlights a significant difference between the structure of maximum projective arrangements and the structure of maximum planar arrangements. Although the strategies to solve these related problems have steps in common, there are differences in the solution. While a maximum projective arrangement can only have leaves at one end if the root vertex has a leaf adjacent (Corollary \ref{cor:projective:characterizing_maximum_arrangements}), a maximum planar arrangement always has leaves at one of the ends.
\begin{corollary}
\label{cor:planar:root_and_leaves_in_end_position}
In a maximum planar arrangement $\arr$ of $\Ftree$,
\begin{enumerate}[(i)]
\item An internal vertex $v$ with leaves is placed at one end of the linear arrangement,
\item All leaves of $v$ are arranged consecutively at the other end of the linear arrangement.
\end{enumerate}
\end{corollary}
\begin{proof}
Theorem \ref{thm:planar:quadratic} indicates that a maximum planar arrangement is a maximum projective arrangement of some rooted tree. Then this corollary follows from combining Corollary \ref{cor:projective:characterizing_maximum_arrangements} with Lemma \ref{lem:planar:max_root_internal_with_leaves}.
\end{proof}

Notice that maximum planar arrangements also satisfy properties in Corollary \ref{cor:projective:characterizing_maximum_arrangements}(\ref{cor:projective:characterizing_maximum_arrangements:iii},\ref{cor:projective:characterizing_maximum_arrangements:iv}).

Although solving {\tt planar MaxLA} for general trees requires traversing the whole tree to identify the optimum root, it can be solved quite easily in caterpillar trees. The next corollary explains how to construct a maximum planar arrangement for any caterpillar tree, using the fact that all caterpillars are {\em graceful trees} proven by Rosa \cite{Rosa1967a}. A tree is said to be {\em graceful} if one can find a (bijective) labeling $\phi\;:\; V \rightarrow \{0,\dots,n-1\}$ such that each edge $uv$ is uniquely identified by the value $|\phi(u) - \phi(v)|$; see \cite{Gallian2018a} for further details. Then, we define the arrangement $\arr$ as $\arr(u)=\phi(u)+1$ and then a graceful tree is one that can be arranged such that there is exactly one edge of length $1$, one of length $2$, and so on, until one edge of length $n-1$. It is easy to see, then, that an arrangement of a tree defined from a graceful labeling has cost exactly ${n \choose 2}$.

The next corollary also generalizes a previous result \cite{Ferrer2013a}, namely
\begin{equation}
\label{eq:introduction:upper_bound_maximum_planar}
\begin{split}
\max_{\Ftree \in \FreeTrees} \{\DMaxPlan\}
	&= \DMaxPlan[\Star] \\
	&= \DMaxPlan[\Linear] = {n \choose 2}.
\end{split}
\end{equation}

\begin{corollary}
\label{cor:planar:formula_caterpillars}
~\\
\begin{enumerate}[(i)]
\item \label{cor:planar_formula_caterpillars:1} Rosa's construction \cite{Rosa1967a} to show that caterpillar trees are graceful corresponds to maximum planar arrangements of the same trees,

\item \label{cor:planar_formula_caterpillars:2} For any caterpillar tree $\Ftree$, $\DMaxPlan = {n \choose 2}$,

\item \label{cor:planar_formula_caterpillars:3} The maximum $\DMaxPlan$ over all trees is achieved at least by caterpillar trees,

\item \label{cor:planar_formula_caterpillars:4} The endpoints of the caterpillar's backbone are in $\maxroots$.
\end{enumerate}
\end{corollary}
\begin{proof}
~\\
\begin{enumerate}[(i)]
\item First, Rosa \cite{Rosa1967a} showed that all caterpillar trees $\Ftree$ are graceful with the following construction. Identify the vertices of the caterpillar's backbone; let $v$ be the vertex at one end of the backbone. Place $v$ at one end of the arrangement, and its leaves at the other end. Let $w$ be the only neighbor of $v$ that has not yet been placed. Place $w$ next to the leaves of $v$, and arrange the leaves of $w$ next to $v$. Continue until there are no more vertices to arrange. It is easy to see that Rosa's construction for caterpillar trees is planar and follows the characterization of a maximum projective arrangement of $\Rtree[v]$ (Theorem \ref{thm:projective:shape_maximum}, Figure \ref{fig:projective:shape_maximum}).

\item Follows from (\ref{cor:planar_formula_caterpillars:1}) and the definition of graceful trees.

\item By (\ref{cor:planar_formula_caterpillars:2}) and Equation \ref{eq:introduction:upper_bound_maximum_planar}, it immediately follows that all caterpillar trees maximize the value of {\tt MaxLA} over all free trees.

\item Finally, it trivially follows from (\ref{cor:planar_formula_caterpillars:1}) that the endpoints of the backbone are in $\maxroots$ because the ends of the backbone have the role of a root vertex in the characterization of a maximum planar arrangement (Theorem \ref{thm:planar:quadratic}, Equation \ref{eq:planar:vstar}).
\end{enumerate}
\end{proof}
\section{Conclusions}
\label{sec:conclusions}

In this paper we have tackled {\tt projective MaxLA} and {\tt planar MaxLA}. We have presented an algorithm to solve {\tt projective MaxLA} in time and space $\bigO{n}$ which, besides, follows an strategy almost identical to that used to solve {\tt projective minLA}. However, the approach we used to solve {\tt planar MaxLA} differs from the approach used to solve {\tt planar minLA}: we were unable to characterize the vertices in $\maxrootsk$ and thus, by extension, the vertex $v$ in the outline above (Section \ref{sec:outline}) for most trees, the exception being caterpillar trees. Moreover, the bounds to the size of $\maxrootsk$ are not sufficiently low to simplify the algorithm presented here. Future work should study stronger characterizations of $\maxrootsk$ and search for broader classes of trees for which $|\maxrootsk|=\bigO{1}$.


\section*{Acknowledgments}
We thank M. Mora for helpful comments and discussions. We thank an anonymous reviewer for pointing us to \cite{Goldman1971a} and to the use of Goldman's algorithm. We owe them the clever proposal of the general algorithm in Section \ref{sec:outline}. The authors are supported by a recognition 2021SGR-Cat (LQMC) from AGAUR (Generalitat de Catalunya). LAP is funded by Secretaria d'Universitats i Recerca de la Generalitat de Catalunya and the Social European Fund. JLE is funded by the grant PID2019-109137GB-C22 from MINECO. 

\bibliographystyle{plain}

\begin{thebibliography}{10}

\bibitem{Alemany2021a}
Llu\'{i}s {Alemany-Puig}, {Juan Luis} Esteban, and Ramon {Ferrer-i-Cancho}.
\newblock The linear arrangement library. a new tool for research on syntactic
  dependency structures.
\newblock In {\em {Proceedings of the Second Workshop on Quantitative Syntax
  (Quasy, SyntaxFest 2021)}}, pages 1--16, Sofia, Bulgaria, 12 2021.
  Association for Computational Linguistics.

\bibitem{Alemany2022a}
Llu\'{i}s {Alemany-Puig}, {Juan Luis} Esteban, and Ramon {Ferrer-i-Cancho}.
\newblock Minimum projective linearizations of trees in linear time.
\newblock {\em Information Processing Letters}, 174:106204, 2022.

\bibitem{Alemany2022b}
Llu\'{i}s {Alemany-Puig} and Ramon {Ferrer-i-Cancho}.
\newblock Linear-time calculation of the expected sum of edge lengths in planar
  linearizations of trees.
\newblock {\em arXiv}, 2022.

\bibitem{Bernhart1974a}
Frank Bernhart and {Paul C.} Kainen.
\newblock The book thickness of a graph.
\newblock {\em Journal of Combinatorial Theory, Series B}, 27(3):320--331,
  1979.

\bibitem{Chung1984a}
{Fan R. K.} Chung.
\newblock On optimal linear arrangements of trees.
\newblock {\em Computers \& Mathematics with Applications}, 10(1):43--60, 1984.

\bibitem{Cormen2001a}
{Thomas H.} Cormen, {Charles E.} Leiserson, {Ronald L.} Rivest, and Clifford
  Stein.
\newblock {\em Introduction to Algorithms}.
\newblock The MIT Press, Cambridge, MA, USA, 2nd edition, 2001.

\bibitem{Nurse2018a}
Matt DeVos and Kathryn Nurse.
\newblock A maximum linear arrangement problem on directed graphs.
\newblock {\em arXiv}, 2018.

\bibitem{Even1975a}
Shimon Even and Yossi Shiloach.
\newblock {NP}-{C}ompleteness of several arrangement problems.
\newblock {T}echnical {R}eport ({CS}0043), Technion, Israel Institute of
  Technology, 1975.

\bibitem{Ferrer2013a}
Ramon {Ferrer-i-Cancho}.
\newblock Hubiness, length, crossings and their relationships in dependency
  trees.
\newblock {\em Glottometrics}, 25:1--21, 2013.

\bibitem{Ferrer2015a}
Ramon {Ferrer-i-Cancho}.
\newblock The placement of the head that minimizes online memory. {A} complex
  systems approach.
\newblock {\em Language Dynamics and Change}, 5(1):114--137, 2015.

\bibitem{Ferrer2021a}
Ramon {Ferrer-i-Cancho}, Carlos {G\'{o}mez-Rodr\'{i}guez}, and {Juan Luis}
  Esteban.
\newblock Bounds of the sum of edge lengths in linear arrangements of trees.
\newblock {\em Journal of Statistical Mechanics: Theory and Experiment},
  2021:023403, 2 2021.

\bibitem{Ferrer2022a}
Ramon {Ferrer-i-Cancho}, Carlos {G\'{o}mez-Rodr\'{i}guez}, {Juan Luis} Esteban,
  and Llu\'{i}s {Alemany-Puig}.
\newblock Optimality of syntactic dependency distances.
\newblock {\em Physical Review E}, 105:014308, 1 2022.

\bibitem{Gallian2018a}
Joseph~A. Gallian.
\newblock Graph labeling.
\newblock {\em The Electronic Journal of Combinatorics}, 1000, 12 2018.

\bibitem{Garey1976a}
{Michael R.} Garey, {David Stifler} Johnson, and {Larry J.} Stockmeyer.
\newblock Some simplified {NP}-{C}omplete graph problems.
\newblock {\em Theoretical Computer Science}, pages 237--267, 1976.

\bibitem{Gildea2007a}
Daniel Gildea and David Temperley.
\newblock Optimizing grammars for minimum dependency length.
\newblock In {\em Proceedings of the 45th Annual Meeting of the Association of
  Computational Linguistics}, pages 184--191, Prague, Czech Republic, 06 2007.
  Association for Computational Linguistics.

\bibitem{Goldberg1976a}
{Mark K.} Goldberg and {I. A.} Klipker.
\newblock A minimal placement of a tree on the line.
\newblock Technical report, Physico-Technical Institute of Low Temperatures.
  Academy of Sciences of Ukranian SSR, USSR, 1976.
\newblock in Russian.

\bibitem{Goldman1971a}
{A. J.} Goldman.
\newblock Optimal center location in simple networks.
\newblock {\em Transportation Science}, 5(2):212--221, 1971.

\bibitem{Gulordava2016a}
Kristina Gulordava and Paola Merlo.
\newblock Multi-lingual dependency parsing evaluation: a large-scale analysis
  of word order properties using artificial data.
\newblock {\em Transactions of the Association for Computational Linguistics},
  4:343--356, 2016.

\bibitem{Harary1969a}
Frank Harary.
\newblock {\em Graph Theory}.
\newblock Addison-Wesley, Reading, MA, 1969.

\bibitem{Harper1964a}
{Lawrence H.} Harper.
\newblock Optimal assignments of numbers to vertices.
\newblock {\em Journal of the Society for Industrial and Applied Mathematics},
  12(1):131--135, March 1964.

\bibitem{Harper1966a}
{Lawrence H.} Harper.
\newblock Optimal numberings and isoperimetric problems on graphs.
\newblock {\em Journal of Combinatorial Theory}, 1(3):385--393, 1966.

\bibitem{Hassin2001a}
Refael Hassin and Shlomi Rubinstein.
\newblock Approximation algorithms for maximum linear arrangement.
\newblock {\em Information Processing Letters}, 80(4):171--177, 2001.

\bibitem{Hochberg2003a}
{Robert A.} Hochberg and {Matthias F.} Stallmann.
\newblock Optimal one-page tree embeddings in linear time.
\newblock {\em Information Processing Letters}, 87(2):59--66, 2003.

\bibitem{Iordanskii1987a}
{Mikhail Anatolievich} Iordanskii.
\newblock Minimal numberings of the vertices of trees --- approximate approach.
\newblock In Lothar Budach, Rais~Gati\v{c} Bukharajev, and Oleg~Borisovi\v{c}
  Lupanov, editors, {\em Fundamentals of Computation Theory}, pages 214--217,
  Berlin, Heidelberg, 1987. Springer Berlin Heidelberg.

\bibitem{Kuhlmann2006a}
Marco Kuhlmann and Joakim Nivre.
\newblock Mildly non-projective dependency structures.
\newblock In {\em Proceedings of the COLING/ACL 2006 Main Conference Poster
  Sessions}, COLING-ACL '06, pages 507--514, 07 2006.

\bibitem{Melcuk1988a}
Igor Mel'\v{c}uk.
\newblock {\em Dependency Syntax: Theory and Practice}.
\newblock State University of New York Press, Albany, NY, USA, 1988.

\bibitem{Nurse2019a}
Kathryn Nurse.
\newblock Maximum linear arrangement of directed graphs.
\newblock Master's thesis, Simon Fraser University, 2019.

\bibitem{Rosa1967a}
Alexander Rosa.
\newblock On certain valuations of the vertices of a graph.
\newblock {\em Journal of Graph Theory}, pages 349--355, 01 1967.

\bibitem{Shiloach1979a}
Yossi Shiloach.
\newblock A minimum linear arrangement algorithm for undirected trees.
\newblock {\em SIAM Journal on Computing}, 8(1):15--32, 1979.

\bibitem{Tamir1991a}
Arie Tamir.
\newblock Obnoxious facility location on graphs.
\newblock {\em {SIAM} Journal on Discrete Mathematics}, 4(4):550--567, November
  1991.

\bibitem{Tily2010a}
{Harry J.} Tily.
\newblock {\em The role of processing complexity in word order variation and
  change}.
\newblock PhD thesis, Stanford University, August 2010.
\newblock Chapter 3: Dependency lengths.

\end{thebibliography}

\end{document}


\allowdisplaybreaks

\title{Supplementary Material -- The Maximum Linear Arrangement Problem for trees under projectivity and planarity}

\author[1]{Llu\'is Alemany-Puig\thanks{lluis.alemany.puig@upc.edu}}
\author[1]{Juan Luis Esteban\thanks{esteban@cs.upc.edu}}
\author[1]{Ramon Ferrer-i-Cancho\thanks{rferrer@cs.upc.edu}}
\affil[1]{Quantitative, Mathematical and Computational Linguistics Research Group, Computer Science Department, Universitat Polit\`ecnica de Catalunya}

\date{}
\maketitle

\section{Introduction}
\label{sec:introduction}

Here we give detailed proofs of some of the Corollaries and Theorems from the main text and pseudocode for the algorithms to solve the problems tackled in this paper. The theorems, corollaries and lemmas from the main text are references here in \maintextref{red, bold-face and emphasized text}. We refer the reader to the main text for the meaning of notation.

\section{Sorting of subtrees by size}
\label{sec:sorting}

The algorithms presented here require sorting the subtrees of a rooted tree $\Rtree$ non-increasingly by size. To this aim, we use a data structure we denote by $\subtreesizelist^\Root$. It has $n$ entries: for any vertex $u\in V$, $\subtreesizelist^\Root[u]$ contains $\degree{u}-1$ tuples $(v,\Nvert{\Root}{v})$ where the $i$-th tuple is the root of $i$-th largest immediate subtree of $\SubRtree{u}$ and its size $\Nvert{\Root}{u}$. The superscript $\Root$ in $\subtreesizelist^\Root$ indicates that the list is {\em rooted} on $\Root$, i.e., there are no tuples of the form $(p, \Nvert{\Root}{p})$ where $p=\parent{u}$ denotes the parent of $u$ in $\Rtree$\footnote{The parent of $u\neq\Root$ in $\Rtree$ is the only vertex $v$ such that $(v,u)\in E(\Rtree)$.}. $\subtreesizelist^\Root$ can be constructed in time $\bigO{n}$ and space $\bigO{n}$ using a result by Hochberg and Stallmann \cite{Hochberg2003a}, which was further studied in \cite{Alemany2022a}, and is stated next.

\begin{proposition}[Hochberg and Stallmann \cite{Hochberg2003a}]
\label{prop:preliminaries:calculate_suvs}
Let $\Ftree=(V,E)$ be a free tree. The values in the set
\begin{equation*}
\{ (u, v, \Nvert{u}{v}), (v, u, \Nvert{v}{u}) \;|\; uv\in E \},
\end{equation*}
where $(x,y, \Nvert{x}{y})$ denotes a 3-tuple in which $xy\in E$, are $\bigO{n}$-time and $\bigO{n}$-space computable.
\end{proposition}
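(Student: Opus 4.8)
The plan is to root the free tree $\Ftree$ at an arbitrary vertex $\Root$, turning it into $\Rtree$, and to exploit the elementary fact that deleting any edge $uv$ splits $V$ into exactly two parts whose sizes sum to $n$. First I would run a single post-order depth-first traversal of $\Rtree$, computing for every vertex $v$ the size $\NvertR{v}$ of the subtree rooted at $v$ as one plus the sum of the sizes of its children. This is the standard linear-time subtree-size computation, and it clearly runs in $\bigO{n}$ time and $\bigO{n}$ space.

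The key observation is that these ``downward'' sizes already yield one of the two tuples per edge for free. For an edge $uv\in E$ in which $u=\parent{v}$ is the parent of $v$ in $\Rtree$, re-rooting the tree at $u$ instead of $\Root$ does not change which vertices lie on the $v$-side of the edge, because $\Root$ itself sits on the $u$-side; hence $\Nvert{u}{v}=\NvertR{v}$. The remaining tuple then follows from complementarity: since removing $uv$ partitions the $n$ vertices into the part containing $v$ and the part containing $u$, we have $\Nvert{v}{u}=n-\Nvert{u}{v}=n-\NvertR{v}$.

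With these two identities in hand, I would make a second pass over the $n-1$ edges of $\Ftree$. For each edge $uv$ with $u=\parent{v}$, I emit the tuple $(u,v,\NvertR{v})$ for $\Nvert{u}{v}$ and the tuple $(v,u,n-\NvertR{v})$ for $\Nvert{v}{u}$, each in constant time. Since there are $\bigO{n}$ edges and each contributes $\bigO{1}$ work on top of the traversal, the whole procedure runs in $\bigO{n}$ time and $\bigO{n}$ space, exactly as claimed.

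I expect the only subtle point to be the justification that $\Nvert{u}{v}=\NvertR{v}$ for a parent edge, i.e.\ that re-rooting from $\Root$ to the adjacent vertex $u$ leaves the $v$-side subtree intact; once this is granted, everything else is bookkeeping over a single depth-first search together with the complementarity identity.
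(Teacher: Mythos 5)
Your proof is correct and follows essentially the same route as the paper's source: the paper states Proposition \ref{prop:preliminaries:calculate_suvs} as a cited result of Hochberg and Stallmann \cite{Hochberg2003a}, deferring to the pseudocode in \cite{Alemany2022a}, and those algorithms do exactly what you describe --- a single post-order traversal computing $\NvertR{v}$, the observation that $\Nvert{u}{v}=\NvertR{v}$ whenever $u=\parent{v}$, and the complementarity identity $\Nvert{v}{u}=n-\Nvert{u}{v}$ for the edges oriented toward the root. Your flagged ``subtle point'' is also handled correctly: both sides of the identity count the component of $\Ftree$ containing $v$ after deleting the edge $uv$, so re-rooting at $u$ indeed leaves that side untouched.
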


Algorithms 2.1 and 4.2 in \cite{Alemany2022a} show pseudocode for Proposition \ref{prop:preliminaries:calculate_suvs}. The former (\cite[Algorithm 2.1]{Alemany2022a}) is used by Algorithm \ref{algo:planar:maximum:calculate_M} under the name \textsc{compute\_s\_ft}. The latter (\cite[Algorithm 4.2]{Alemany2022a}) is used by Algorithm \ref{algo:projective:maximum} to construct $\subtreesizelist^\Root$. In \cite{Alemany2022a}, $\subtreesizelist^\Root$ was used to calculate $\DminProj$; here we need it to compute $\DMaxProj$. Readers will find pseudocode to construct $\subtreesizelist^\Root$ in \cite[Algorithm 4.2]{Alemany2022a}. That algorithm is used here under the name \textsc{sorted\_adjacency\_list\_rt}.

\section{{\tt Projective MaxLA}}

Algorithm \ref{algo:projective:maximum} solves {\tt projective MaxLA}. It is a modification of algorithms previously used to solve {\tt projective minLA} \cite[Algorithm 4.1]{Alemany2022a}. We also justify its correctness and cost more formally.

\begin{theorem}[\maintextref{Corollary 4.4}]
For any rooted tree $\Rtree$, Algorithm \ref{algo:projective:maximum} solves {\tt projective MaxLA} in time $\bigO{n}$, space $\bigO{n}$.
\end{theorem}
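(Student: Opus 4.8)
The plan is to prove correctness and the two $\bigO{n}$ bounds separately, exploiting the recursive structure that projectivity forces on the arrangement. First I would recall that in any projective arrangement of $\Rtree$ the vertices of every subtree $\SubRtree{u}$ occupy a contiguous interval; hence the arrangement is determined, vertex by vertex, by how the immediate subtrees hanging from each $u$ are split between the two sides of $u$ and ordered within each side. This turns the computation of $\DMaxProj$ into a local optimization at every vertex: the lengths of the edges joining $u$ to its children depend only on the sizes $\Nvert{\Root}{v}$ of those immediate subtrees and on the order of their blocks around $u$, whereas the cost strictly inside a block $\SubRtree{v}$ is itself a projective arrangement of $\SubRtree{v}$, coupled to the level of $u$ only through the position of $v$ inside its block. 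I would then invoke the structural characterization established in the main text (\maintextref{Corollary 4.4} and the results it rests on), which fixes, as a function of the sorted subtree sizes, the optimal split between the two sides, the ordering of the blocks, and where each root sits inside its block so as to lengthen its edge to $u$.

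Given this characterization, correctness reduces to checking that Algorithm \ref{algo:projective:maximum} realizes the prescribed local arrangement at every vertex and accumulates the induced edge lengths correctly. The enabling object is $\subtreesizelist^\Root$: for each $u$ it already stores the tuples $(v,\Nvert{\Root}{v})$ of the immediate subtrees of $\SubRtree{u}$ in non-increasing order of size, so the greedy order demanded by the structural theorem can be read off directly rather than recomputed. I would argue by induction on $\SubRtree{u}$, processed in a single traversal, that the value returned for $u$ equals the optimum projective cost of $\SubRtree{u}$: the base case is a leaf (cost $0$), and the inductive step is exactly the local optimization above together with the fact that the block-internal costs are the optima of the smaller instances $\SubRtree{v}$. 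The witnessing arrangement is produced along the same recursion by assigning each block its interval and recursing inside it.

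For the running time and space I would separate the one-off preprocessing from the main traversal. Constructing $\subtreesizelist^\Root$ costs $\bigO{n}$ time and $\bigO{n}$ space by Proposition \ref{prop:preliminaries:calculate_suvs}, through \textsc{sorted\_adjacency\_list\_rt} (\cite[Algorithm 4.2]{Alemany2022a}). The traversal then visits each vertex once, and at a vertex $u$ the local work is $\bigO{\degree{u}}$: the required order is already available in $\subtreesizelist^\Root$, and each block needs only a constant amount of bookkeeping (a running offset that converts accumulated sizes into edge lengths, and the interval endpoints). Summing $\degree{u}$ over all vertices gives $2(n-1)=\bigO{n}$, so the traversal runs in $\bigO{n}$ time; all auxiliary arrays (the arrangement, partial sums, offsets) have size $\bigO{n}$.

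The step I expect to be the main obstacle is the linear-time guarantee rather than the correctness. Naively the local optimization at $u$ seems to need its immediate subtrees sorted by size, which would suggest an $\bigO{n\log n}$ total; the crux is to lean on $\subtreesizelist^\Root$ so that this order is supplied for free by the $\bigO{n}$ preprocessing, and to check that everything else done per vertex---the split into two sides, the cumulative offsets that turn sizes into edge lengths, and the placement of each root at the appropriate end of its block---costs $\bigO{1}$ per immediate subtree. Once this per-vertex $\bigO{\degree{u}}$ bound is secured, the global $\bigO{n}$ time and space follow from the handshaking sum, and correctness is inherited from the structural characterization.
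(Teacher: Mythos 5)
Your proposal is correct and follows essentially the same route as the paper: correctness is delegated to the main text's structural characterization of maximum projective arrangements (which the paper cites as \maintextref{Theorem 4.1} --- note you should lean on that result rather than \maintextref{Corollary 4.4}, which is the statement under proof), implemented by reading the non-increasing subtree order off the precomputed list $\subtreesizelist^\Root$, with each subtree root placed at an end of its interval and the side alternating across recursion levels. The complexity argument is also the paper's: $\bigO{n}$ time and space to build $\subtreesizelist^\Root$ via Proposition \ref{prop:preliminaries:calculate_suvs} (\textsc{sorted\_adjacency\_list\_rt}), plus $\bigO{\degree{u}}$ work per vertex during the recursion, which sums to $\bigO{n}$.
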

\begin{proof}
Algorithm \ref{algo:projective:maximum} is a modification of an algorithm previously used to solve {\tt projective minLA} \cite[Algorithm 4.1]{Alemany2022a}. The algorithm to solve {\tt projective MaxLA} is based on \maintextref{Theorem 4.1}. The algorithm takes $\SubRtree{u}$ and uses the subtree-size adjacency list $\subtreesizelist^\Root[u]$ to arrange its immediate subtrees in a non-increasing fashion in the interval $[a,b]$ passed as parameter (\maintextref{Theorem 4.1(i)}). In the first call, the interval is $[1,n]$. Since we have to construct a left (resp. right) branching arrangement (\maintextref{Theorem 4.1}), we place the root of the subtree at either position $a$ or $b$; the exact end is decided using the parameter $\tau$ which indicates the side in which $u$ has been placed with respect to its parent in the arrangement, either {\tt left} or {\tt right} (\maintextref{Theorem 4.1(ii)}). The interval of positions within the arrangement of the $i$-th subtree when $\tau=\mathtt{left}$ is
\begin{equation*}
\left[
a + \sum_{j=1}^{i-1} \largestNvert{\Root}{u}{j} + 1, a + \sum_{j=1}^i \largestNvert{\Root}{u}{j}
\right]
\end{equation*}
and when $\tau=\mathtt{right}$ the interval is
\begin{equation*}
\left[
b - \sum_{j=1}^i \largestNvert{\Root}{u}{j}, b - \sum_{j=1}^{i-1} \largestNvert{\Root}{u}{j} - 1
\right].
\end{equation*}

It is easy to see that its time complexity is $\bigO{n}$. Its space complexity is also $\bigO{n}$ given the need to construct $\subtreesizelist^{\Root}$.
\end{proof}

\begin{algorithm*}
	\small
	\caption{{\tt Projective MaxLA}.}
	\label{algo:projective:maximum}
	\DontPrintSemicolon
	
	\SetKwProg{Fn}{Function}{ is}{end}
	\Fn{\textsc{ArrangeMaximumProjective}$(\Rtree)$} {
		\KwIn{$\Rtree$ rooted tree at $\Root$.}
		\KwOut{A maximum projective arrangement $\arr$.}
		
		$\subtreesizelist^\Root \gets \textsc{sorted\_adjacency\_list\_rt}(\Rtree)$ \label{algo:projective:maximum:root_list} \tcp{\cite[Algorithm 4.2]{Alemany2022a}}
		$\arr \gets \{0\}^n$ \tcp{empty arrangement}
		\textsc{Arrange\_Rec}$(\subtreesizelist^\Root, \Root, \mathtt{right}, 1, n, \arr)$ \label{algo:projective:maximum:first_rec_call} \tcp{The starting side is arbitrary.}
		\Return $\arr$
	}
	
	\SetKwProg{Fn}{Function}{ is}{end}
	\Fn{\textsc{Arrange\_Rec}$(\subtreesizelist^\Root, u, \tau, a,b, \arr)$} {
		\KwIn{$\subtreesizelist^\Root$ as described in Section \ref{sec:sorting}; $u$ the root of the subtree to be arranged; $\tau$ the side (relative position of $u$ with respect to its parent); $[a,b]$ interval of positions of $\arr$ where to embed $\SubRtree{u}$; $\arr$ a partially-constructed arrangement.}
		\KwOut{$\arr$ updated with the optimal projective arrangement for $\SubRtree{u}$ in $[a,b]$.}
		
		$C_u \gets \subtreesizelist^\Root[u]$ \tcp{the children of $u$ sorted non-increasingly by size}
		$S\gets 0$ \tcp{Cumulative size of the subtrees}
		
		\lIf {$\tau = \mathtt{left}$} {
			$\tau_{\mathrm{next}} \gets$ {\tt right}
		}
		\lElse {
			$\tau_{\mathrm{next}} \gets$ {\tt left}
		}
		
		\For {$i$ from $1$ to $|C_u|$} { \label{algo:projective:maximum:for_loop}
			$v, n_v \gets C_u[i]$ \tcp{the $i$-th child of $u$, and its size $n_v=\Nvert{\Root}{v}$}
			
			\lIf {$\tau$ is $\mathtt{left}$} {
				$a_{\mathrm{next}}\gets a + S + 1$;
				$b_{\mathrm{next}}\gets a_{\mathrm{next}} + n_v - 1$
			}
			\lElse {
				$b_{\mathrm{next}}\gets b - S - 1$;
				$a_{\mathrm{next}}\gets b_{\mathrm{next}} - n_v + 1$
			}
			
			\textsc{Arrange\_Rec}$(\subtreesizelist^\Root, v, \tau_{\mathrm{next}}, a_{\mathrm{next}}, b_{\mathrm{next}}, \arr)$
			
			$S \gets S + n_v$
		}
		\lIf {$\tau = \mathtt{left}$} {
			$\arr(u) \gets a$
		}
		\lElse {
			$\arr(u) \gets b$
		}
	}
\end{algorithm*}

\section{{\tt Planar MaxLA}}

Here we present Algorithm \ref{algo:planar:maximum} to solve {\tt Planar MaxLA}. We also prove its correctness and cost more formally.

\begin{theorem}[\maintextref{Theorem 5.3}]
For any free tree $\Ftree$, Algorithm \ref{algo:planar:maximum} solves {\tt planar MaxLA} in time and space $\bigO{n}$.
\end{theorem}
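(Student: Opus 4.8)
The plan is to prove correctness and the two $\bigO{n}$ bounds in turn, as in the projective proof above, the difference being that now there is no fixed root. For correctness I would appeal to the structural characterization \maintextref{Theorem 5.3}, whose content is that the planar arrangements of a free tree $\Ftree$ are exactly the projective arrangements of $\Rtree[r]$ ranging over all choices of root $r\in V$: since projectivity implies planarity and every planar arrangement is projective for some root, this yields
\begin{equation*}
\DMaxPlan[\Ftree] = \max_{r\in V} \DMaxProj[{\Rtree[r]}].
\end{equation*}
The problem therefore splits into computing this maximum over the $n$ candidate roots and then reconstructing one optimal arrangement; for the reconstruction I would feed the winning root into the projective routine \textsc{ArrangeMaximumProjective} of Algorithm \ref{algo:projective:maximum}, which already costs $\bigO{n}$, so the entire difficulty lies in evaluating all rootings.

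Evaluating $\DMaxProj[{\Rtree[r]}]$ from scratch for every $r$ costs $\bigO{n^2}$, so the heart of the proof is to obtain all $n$ per-root optima in $\bigO{n}$ total. First I would invoke Proposition \ref{prop:preliminaries:calculate_suvs} to compute every edge-indexed size $\Nvert{u}{v}$ in $\bigO{n}$ time and space, which is precisely the role of \textsc{compute\_s\_ft} inside Algorithm \ref{algo:planar:maximum:calculate_M}. The crucial observation is that $\DMaxProj[{\Rtree[r]}]$ decomposes as a sum of per-vertex contributions, where the contribution of a vertex $x$ depends only on the sizes of the subtrees hanging from $x$ in the rooting at $r$, sorted non-increasingly as in \maintextref{Theorem 4.1}. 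When the root moves from $u$ to an adjacent vertex $v$, only the orientation of the edge $uv$ flips, so the child-subtree multiset changes solely at $u$, which loses the subtree of size $\Nvert{u}{v}$, and at $v$, which gains the subtree of size $\Nvert{v}{u}=n-\Nvert{u}{v}$; every other vertex keeps its contribution unchanged. Hence one rooted traversal to initialize the objective at an arbitrary root, followed by a re-rooting traversal that applies these two local corrections along each edge, produces the value for all roots, and Algorithm \ref{algo:planar:maximum:calculate_M} stores these in the array $M$.

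The cost analysis then closes out. Space is $\bigO{n}$: we keep only the edge-indexed sizes, the sorted adjacency structure, and the array $M$, all of linear size. Time is $\bigO{n}$: the sizes are computed once in $\bigO{n}$, the re-rooting traversal visits each of the $\bigO{n}$ directed edges a constant number of times, selecting the best root is a single scan of $M$, and the final reconstruction reuses the linear-time projective construction.

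The hard part will be bounding the re-rooting corrections. The per-vertex contribution depends on the subtrees \emph{sorted} by size, and moving the root across $uv$ inserts one subtree into the sorted child list of $v$ and deletes one from that of $u$; a literal re-sort would cost more than constant time. I expect the main obstacle to be recasting the branching objective of \maintextref{Theorem 4.1} into a prefix-sum form over the sorted sizes that can absorb a single insertion or deletion in amortized constant work, using the sorted adjacency list (itself constructible in $\bigO{n}$ by counting-sort on sizes bounded by $n$) so that the whole re-rooting pass stays within $\bigO{n}$. Checking that the values so propagated indeed realize $\max_{r} \DMaxProj[{\Rtree[r]}]$, and hence the planar optimum of \maintextref{Theorem 5.3}, is the correctness subtlety that must accompany the complexity bound.
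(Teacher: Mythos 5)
Your overall strategy is the paper's: reduce {\tt planar MaxLA} to $\max_{r\in V}\DMaxProj[{\Rtree[r]}]$, compute one rooting from scratch with Algorithm \ref{algo:projective:maximum}, propagate to all other roots by a traversal that updates the optimum across each edge in constant time, and finally rerun the projective routine on the winning root. That is exactly what Algorithms \ref{algo:planar:maximum}, \ref{algo:planar:optimal_root} and \ref{algo:planar:maximum:calculate_M} do. But the step you defer --- ``recasting the branching objective \ldots\ into a prefix-sum form that can absorb a single insertion or deletion in amortized constant work'' --- is not a detail; it is the entire content of the theorem, and you leave it as an expectation rather than a proof. The paper resolves it with \maintextref{Lemma 5.2}, a closed-form relation between $\DMaxProj[{\Rtree[v]}]$ and $\DMaxProj[{\Rtree[u]}]$ for adjacent $u,v$ evaluable in $\bigO{1}$ from precomputed quantities, together with the structure $\magadjlist$; without these, your argument does not yet establish the $\bigO{n}$ bound.

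Moreover, the mechanism you anticipate --- dynamically maintaining sorted child lists under insertion and deletion as the root moves --- is the wrong picture, and it is not clear it could be made $\bigO{1}$ per edge. The paper's key observation is that no dynamic updates are needed at all: for each vertex $u$, one sorts once (by counting sort, globally in $\bigO{n}$) the \emph{full} neighbor list of $u$ in the free tree by $\Nvert{u}{v}$. Under any rooting, the sorted child list of $u$ is this static list with the parent removed, so all that is required per directed edge $(u,v)$ is the position $\indexof{u}{v}$ of $v$ in $u$'s sorted list, the reverse index $\indexof{v}{u}$, and the prefix sum $\sumlargest{u}{v}$ of the largest subtree sizes up to that position; these are exactly the fields of $\magadjlist$. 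Even this precomputation has a subtlety your sketch does not touch: the cross-indices $\indexof{v}{u}$ cannot be filled in the same pass that builds $\magadjlist[u]$, which is why Algorithm \ref{algo:planar:maximum:calculate_M} needs the auxiliary list $J$ and a second counting-sorted pass (lines \ref{algo:planar:maximum:calculate_M:update_J}--\ref{algo:planar:maximum:calculate_M:update_M:last}). Finally, your reduction $\DMaxPlan[\Ftree]=\max_{r\in V}\DMaxProj[{\Rtree[r]}]$ is correct (any planar arrangement is projective when the tree is rooted at the leftmost vertex of the arrangement), but it is a structural result of the main text, not \maintextref{Theorem 5.3} itself, which is the algorithmic statement you are trying to prove; citing it as you do makes the argument circular on paper even though the underlying fact is available.
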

\begin{proof}
The main idea behind Algorithm \ref{algo:planar:maximum} is that it performs a BFS traversal on the tree and, for every traversed edge $uv$, say, from $u$ to $v$, apply \maintextref{Lemma 5.2} so as to calculate the value $\DMaxProj[\Rtree[v]]$ in time $\bigO{1}$ (using the fact that $\DMaxProj[\Rtree[u]]$ is known). The BFS starts at an arbitrary vertex, say $w$, for which we calculate $\DMaxProj[\Rtree[w]]$ in time $\bigO{n}$ using Algorithm \ref{algo:projective:maximum}. Once the traversal has finished the algorithm will have calculated all values $\DMaxProj[\Rtree[x]]$ for all $x\in V$.

Algorithm \ref{algo:planar:maximum} simply calls Algorithm \ref{algo:planar:optimal_root} which first constructs, using Algorithm \ref{algo:planar:maximum:calculate_M}, an adjacency list similar to $\subtreesizelist^\Root$, denoted as $\magadjlist$ (described in \maintextref{Section 5}). After that, Algorithm \ref{algo:planar:optimal_root} performs the BFS traversal explained above to calculate in constant time $\DMaxProj[\Rtree[v]]$ from $\DMaxProj[\Rtree[u]]$ using the values stored in $\magadjlist$ and applying them with \maintextref{Lemma 5.2}.

The structure $\magadjlist$ contains a tuple for every directed edge. Given a directed edge $(u,v)$, the tuple is of the form $(v,\Nvert{u}{v},\indexof{u}{v},\indexof{v}{u},\sumlargest{u}{v})$, where $v$ is the root of the $i$-th largest immediate subtree of $\Rtree[u]$ and $\sumlargest{u}{v}$ denotes the sum of the sizes of the first $\indexof{u}{v}$ largest immediate subtrees of $\Rtree[u]$,
\begin{equation*}
\sumlargest{u}{v} = \sum_{j=1}^{\indexof{u}{v}} \singlelargestNvert{u}{j}.
\end{equation*}
Notice that for every edge $uv$, the entries
\begin{align*}
\magadjlist[u][\indexof{u}{v}]
	&= (v, \Nvert{u}{v}, \indexof{u}{v}, \indexof{v}{u}, \sumlargest{u}{v}) \\
\magadjlist[v][\indexof{v}{u}]
	&= (u, \Nvert{v}{u}, \indexof{v}{u}, \indexof{u}{v}, \sumlargest{v}{u})
\end{align*}
are related since the third value in $\magadjlist[u][\indexof{u}{v}]$ is the same as the fourth value in $\magadjlist[v][\indexof{v}{u}]$ and the third in $\magadjlist[v][\indexof{v}{u}]$ is the same as the fourth in $\magadjlist[u][\indexof{u}{v}]$.

It can be seen in Algorithm \ref{algo:planar:maximum:calculate_M} that the calculation of the values $\Nvert{u}{v}$, $\indexof{u}{v}$, $\sumlargest{u}{v}$ for the tuples in entry $\magadjlist[u]$ is correct. It remains to ascertain that the value $\indexof{v}{u}$ is also calculated correctly in time $\bigO{n}$. These values cannot be calculated directly in the first for loop. This is why they are added as `0' in line \ref{algo:planar:maximum:calculate_M:update_M:first}, and thus an auxiliary list $J$ (line \ref{algo:planar:maximum:calculate_M:J}) is used instead: it is first filled in the first loop (line \ref{algo:planar:maximum:calculate_M:update_J}), and is used in the second loop starting at line \ref{algo:planar:maximum:calculate_M:for_loop:2} to add the values $\indexof{v}{u}$ in the corresponding tuples in $\magadjlist[u]$ (line \ref{algo:planar:maximum:calculate_M:update_M:last}). At the end of the first for loop, the auxiliary list $J$ contains the same tuples as $S$ except that $J$ contains extra indices $\indexof{v}{u}$: notice that values $(v,u,n-\Nvert{u}{v},\indexof{u}{v})$ are the same as $(v,u,\Nvert{v}{u},\indexof{u}{v})$, and, furthermore, the same as $(u,v,\Nvert{u}{v},\indexof{v}{u})$. Now, let $J_u\subseteq J$ be the subset of tuples from $J$ starting with vertex $u$. The relative order of the tuples from $J_u$ (in $J$), is the same from the corresponding tuples in $S$ with the exception of ties among neighbors $v$ of $u$ with the same size $\Nvert{u}{v}$. The order in which they appear in $J$ is irrelevant, that is, it is easy to see that for a given $u\in V$ and a subset of neighbors $\{v_1,\dots,v_k\}$ of $u$ such that $\Nvert{u}{v_i}=\Nvert{u}{v_j}$ for all $i,j\in[1,k]$ we have that $\indexof{v_i}{u}=1$. It remains to distribute them appropriately among the tuples in $\magadjlist[u]$. This is done in the second for loop (line \ref{algo:planar:maximum:calculate_M:for_loop:2}).

The size complexity of $\magadjlist$ is clearly $\bigO{n}$. The extra space complexity needed to store the values of $\DMaxProj[\Rtree[x]]$ for all $x\in V$ is also $\bigO{n}$. The call to Algorithm \ref{algo:projective:maximum}, the BFS traversal, and the application of \maintextref{Lemma 5.2} all require time complexity $\bigO{n}$.
\end{proof}

\begin{algorithm*}
	\small
	\caption{{\tt Planar MaxLA}.}
	\label{algo:planar:maximum}
	\DontPrintSemicolon
	
	\KwIn{$\Ftree$ free tree.}
	\KwOut{A maximum planar arrangement $\arr$.}
	
	\SetKwProg{Fn}{Function}{ is}{end}
	\Fn{\textsc{ComputeMaximumPlanarD}$(\Ftree)$} {
		$s \gets \textsc{ComputeOptimalRoot}(\Ftree)$ \tcp{Algorithm \ref{algo:planar:optimal_root}.}
		\Return \textsc{ArrangeMaximumProjective}$(\Rtree[s])$ \tcp{Algorithm \ref{algo:projective:maximum}.}
	}
\end{algorithm*}

\begin{algorithm*}
	\small
	\caption{Finding an optimal root.}
	\label{algo:planar:optimal_root}
	\DontPrintSemicolon
	
	\KwIn{$\Ftree$ free tree.}
	\KwOut{A vertex $\Root$ that maximizes $\DMaxPlan[\Rtree]$.}
	
	\SetKwProg{Fn}{Function}{ is}{end}
	\Fn{\textsc{ComputeOptimalRoot}$(\Ftree)$} {
		
		$\magadjlist \gets \textsc{Compute\_}\magadjlist(\Ftree)$ \label{line:algo:planar:maximum:call_M} \tcp{Algorithm \ref{algo:planar:maximum:calculate_M}}
		
		$w \gets$ choose an arbitrary vertex \label{line:algo:planar:maximum:starting_vertex} \;
		$Q\gets\{w\}$ \tcp{the queue of the breadth-first search}
		$vis\gets\{0\}^n$; $vis[w]\gets 1$ \tcp{vertices marked as not visited, except $w$}
		
		$D \gets \{0\}^n$ \tcp{an $n$-zero vector}
		$D[w] \gets \DMaxProj[\Rtree[w]]$ \tcp{calculated using Algorithm \ref{algo:projective:maximum}}
		
		\While {$|Q|>0$} {
			$u\gets front(Q)$ \tcp{the next vertex in the queue}
			$Q = Q\setminus\{u\}$ \tcp{remove $u$ from the queue}
			\For {$X\in\magadjlist[u]$} {
				\If {$vis[X.v] \neq 1$ and $\degree{X.v} > 1$} { \tcp{excluding leaves because they do not belong to $\maxrootsk$}
				
					$D[X.v] \gets \DMaxProj[\Rtree[X.v]]$ \tcp{\maintextref{Lemma 5.2}. Use $\magadjlist[X.v][X.\indexof{v}{u}].\sumlargest{v}{u}$ and $D[u]$}
					
					$vis[X.v] \gets 1$ \tcp{Mark vertex $X.v$ as visited}
					$Q \gets Q \cup \{X.v\}$ \tcp{Append $X.v$ to the queue}
				}
			}
		}
		$s \gets \argmax_{u\in V}\{D[u]\}$ \tcp{$D[u]$ contains $\DMaxProj[\Rtree[u]]$ for all $u\in V$}
		\Return $s$
	}
\end{algorithm*}

\begin{algorithm*}
	\small
	\caption{Calculation of $\magadjlist$.}
	\label{algo:planar:maximum:calculate_M}
	\DontPrintSemicolon
	
	\KwIn{$\Ftree$ free tree.}
	\KwOut{$\magadjlist$, where $\magadjlist[u]$ contains $\degree{u}$-many tuples of the form $(v,\Nvert{u}{v}, \indexof{u}{v}, \indexof{v}{u}, \sumlargest{u}{v})$.}
	
	\SetKwProg{Fn}{Function}{ is}{end}
	\Fn{\textsc{Compute\_}$\magadjlist(\Ftree)$} {
		
		$S\gets \textsc{compute\_s\_ft}(\Ftree)$ \tcp{\cite[Algorithm 2.1]{Alemany2022a}, Proposition \ref{prop:preliminaries:calculate_suvs}}
		Sort the tuples $(u,v, \Nvert{u}{v})$ in $S$ non-increasingly by $\Nvert{u}{v}$ using counting sort \cite{Cormen2001a} \label{algo:sorting_suvs}\;
		
		$J\gets \emptyset$ \tcp{$J$ is used to compute the indices $\indexof{v}{u}$ in every entry of $\magadjlist[u]$} \label{algo:planar:maximum:calculate_M:J}
		
		$\magadjlist\gets\{\emptyset\}^n$ \tcp{Firstly, fill $\magadjlist$ partially}
		\For {$(u,v, \Nvert{u}{v}) \in S$} { \label{algo:planar:maximum:calculate_M:for_loop:1}
			$k_u \gets |\magadjlist[u]|$ \tcp{The current size of $\magadjlist[u]$}
			$g_u \gets \magadjlist[u][k_u - 1].\sumlargest{u}{v}$ \tcp{The value $\sumlargest{u}{v}$ in the last tuple of $\magadjlist[u]$}
			\tcp{By construction, $k_u =\indexof{u}{v}$}
			$\magadjlist[u] \gets \magadjlist[u] \cup (v, \Nvert{u}{v}, k_u, 0,  \Nvert{u}{v} + g_u)$ \tcp{Append at end in $\bigO{1}$.} \label{algo:planar:maximum:calculate_M:update_M:first}
			
			\tcp{Recall that $\Nvert{v}{u} = n - \Nvert{u}{v}$}
			
			$J \gets J \cup (v,u, n - \Nvert{u}{v}, k_u )$ \tcp{Append at end in $\bigO{1}$.} \label{algo:planar:maximum:calculate_M:update_J}
		}
		
		Sort the tuples $(u,v, \Nvert{u}{v}, i)$ in $J$ (where $i=\indexof{v}{u}$) non-increasingly by $\Nvert{u}{v}$ with counting sort \cite{Cormen2001a} \;
		
		\tcp{Fill the missing indices $\indexof{v}{u}$ in the tuples of $\magadjlist[u]$ using $J$}
		$I \gets \{1\}^n$ \;
		\For {$ (u,v, \Nvert{u}{v}, i) \in J $} { \label{algo:planar:maximum:calculate_M:for_loop:2}
			$\magadjlist[u][ I[u] ].\indexof{v}{u} \gets i$ \label{algo:planar:maximum:calculate_M:update_M:last}\;
			$I[u] \gets I[u] + 1$
		}
		
		\Return $\magadjlist$
	}
\end{algorithm*}

\bibliographystyle{plain}